\newcommand{\E}{\mathbb{E}}
\newcommand{\Log}{\textrm{log}}
\newcommand{\mb}{\mathbf}
\newcommand{\bs}{\boldsymbol}
\newcommand{\Prob}{\textrm{Pr}}
\newtheorem{definition}{Definition}
\newtheorem{lemma}{Lemma}
\newtheorem{proposition}{Proposition}
\newcommand{\argmax}[1]{\underset{#1}{\operatorname{argmax}}} 
\newcommand{\argmin}[1]{\underset{#1}{\operatorname{argmin}}} 
\title{Precoding for Outage Probability Minimization on Block Fading Channels}
\author{Dieter~Duyck,~Joseph~J.~Boutros,~and~Marc~Moeneclaey
\thanks{\noindent  Dieter  Duyck  and  Marc Moeneclaey  are  with  the
Department  of Telecommunications  and  Information processing,  Ghent
University,   St-Pietersnieuwstraat    41,   B-9000   Gent,   Belgium,
\{dduyck,mm\}@telin.ugent.be.  Joseph J.  Boutros is  with  Texas A\&M
University  at Qatar,  PO  Box 23874  Doha, Qatar,  boutros@tamu.edu}
\thanks{\copyright 2011 IEEE. Personal use of this material is permitted. Permission from IEEE must be obtained for all other uses, in any current or future media, including reprinting/republishing this material for advertising or promotional purposes, creating new collective works, for resale or redistribution to servers or lists, or reuse of any copyrighted component of this work in other works.}}
\begin{document}
\maketitle

\begin{abstract} 
The outage probability limit is a fundamental and achievable lower bound on the word error rate of coded communication systems affected by fading. This limit is mainly determined by two parameters: the diversity order and the coding gain. With linear precoding, full diversity on a block fading channel can be achieved without error-correcting code. However, the effect of precoding on the coding gain is not well known, mainly due to the complicated expression of the outage probability. Using a geometric approach, this paper establishes simple upper bounds on the outage probability, the minimization of which yields to precoding matrices that achieve very good performance.
 For discrete alphabets, it is shown that the combination of constellation expansion and precoding is sufficient to closely approach the minimum possible outage achieved by an i.i.d. Gaussian input distribution, thus essentially maximizing the coding gain.
\end{abstract}

\section{Introduction}

In many applications, using techniques such as frequency-hopping (GSM, EDGE), time-interleaving (DVB-T), OFDMA (WiMax and LTE), H-ARQ with cross-packet coding \cite{Hausl2007a, duy2010dod}, and cooperative  communications \cite{sendonaris2003ucd, sendonaris2003ucd2, laneman2004cdw, duy2009ldg}, the channel can be modelled as a flat block fading (BF) channel \cite{biglieri1998fci}, where the fading gain is piecewise constant over the duration of a transmitted packet. Due to motion of the transmitter, receiver or objects between the transmitter and receiver, the fading gains vary from one packet to the next and are considered unknown at the transmitter side. The fraction of codewords where decoding fails to wipe out all errors is referred to as the average word error rate (WER). When displayed on a log-scale versus the average signal-to-noise ratio (SNR) in decibel, the high-SNR slope of the WER is called the diversity order\footnote{We assume fading gain distributions where the WER can be expressed as $g \textrm{SNR}^{-\delta}$ where $g$ and $\delta$ are constants; and $\delta$ is the diversity order.}. Since the diversity order determines how fast the error rate decreases with the SNR, it is then a key parameter of the communication system.

By appropriately linearly precoding multidimensional signal constellations before transmitting each of its components on a different block of the block fading channel (denoted as component interleaving), the Singleton bound does not limit the maximal coding rate $R_c$, achieving full-diversity, any more \cite{gresset2004olp, gresset2008stc, fab2007mcm}. In particular, full diversity can be achieved without error-correcting code. Therefore, it has been almost exclusively studied for uncoded schemes (see \cite{bayer2004nac, biglieri1998fci} and references therein, e.g. \cite{boutros1998ssd}), except for few papers investigating outer coded transmission schemes for MIMO (e.g. \cite{gresset2004olp, kraidy2004olp}) and recent work on pairing subchannels when channel state information at the transmitter (CSIT) is available \cite{mohammed2011pbp, mohammed2010mpw}. However, besides yielding full diversity with a higher rate, linear precoding can also improve the coding gain of coded systems. So the effect of linear precoding on the diversity order is well understood, but there is no sufficient insight in the effect of linear precoding on coding gain of coded systems.

Before studying the optimization of the WER for practical schemes with linear precoding, it is important to understand the influence of linear precoding on the performance limits of the communication channel. The  outage probability limit  is an achievable lower bound on the WER of coded  systems in the limit of large block length \cite{biglieri1998fci}, \cite{ozarow1994itc} and is given by \cite[Sec. 5.4]{tse2005fwc},
\begin{equation*}
	P_{\textrm{out}}(\gamma, P, R) = \Prob\big(I(\bs{\alpha}, \gamma, P) < R \big),
\end{equation*}
\noindent where $I\left(\bs{\alpha}, \gamma, P\right)$ is the instantaneous mutual information between transmitted symbols and received symbols as a function of the set of fading gains $\bs{\alpha}$ observed during the transmission of one codeword, the average SNR $\gamma$, and the precoding matrix $P$. By choosing a well designed precoding matrix $P$, the outage probability can be minimized. But only a brute force optimization can minimize the outage probability as a closed form expression of the outage probability is not available. Such an optimization is often intractable when the number of fading gains per codeword is larger than two and/or large constellations (e.g. 16 points) are used. A simple approach could be designing $P$ such that the mean of $I\left(\bs{\alpha}, \gamma, P\right)$ over the fading distribution is maximized, in the hope that the area under the left tail of its probability density function (pdf) would be minimized. However, the ergodic mutual information $\E_{\bs{\alpha}}[I\left(\bs{\alpha}, \gamma, P\right)]$ contains no information on the diversity order, and, due to the limited spectral efficiency of finite discrete input alphabets, for increasing $\gamma$ rapidly converges to a maximum that does not depend on the precoding matrix. Therefore, the maximization of the ergodic mutual information fails to provide the optimum precoding matrix \cite{duy2010rmf}. 

In this paper, we first study the effect of linear precoding of discrete input alphabets on the outage probability\footnote{Note that for i.i.d. Gaussian input distributions, it was found that a power allocation (corresponding to a multiplication of the symbol vector with a non-unitary matrix) over the different blocks minimized the outage probability \cite[Sec. 5.1]{telatar1999com}, \cite{jorswieck2007opi}.}. A unitary $B \times B$ precoding matrix has $B^2$ parameters that are free to choose (denoted as degrees of freedom), so that the optimization of the precoding matrix is multivariate. In a brute force optimization, Monte Carlo simulations are required to take into account the distribution of all fading gains when computing the outage probability. Here, the analysis uses a geometric approach that leads to upper bounds on the outage probability, which are easier to optimize, because it is no longer necessary to perform a Monte Carlo simulation based on the fading gain distribution. 
Therefore, with a minimal computational effort, the precoding matrix minimizing the upper bound on the outage probability limit can be determined. These results serve as a basis for the design of practical coded systems with linear precoding \cite{duy2010cmf}, which is discussed in the final section before conclusions.

\section{System model\label{sec_system_model}}

At the transmitter output, a packet is represented as a real-valued or complex column vector $\bs{\chi} = [\bs{\chi}(1)^T, \ldots, \bs{\chi}(B)^T]^T$ of dimension $N$, consisting of $B$ blocks that each contain $N/B$ symbols, where $(.)^T$ indicates transposition; the b-th block of the packet is $\bs{\chi}(b) = [\chi(b)_1, \ldots, \chi(b)_{\frac{N}{B}}]^T$ with $\mathbb{E}\left[|\chi(b)_n|^2\right]=1$. The channel is memoryless with additive white Gaussian noise and multiplicative real-valued fading. The fading coefficients are only known at the decoder side. The received signal vector $\bs{\mu}(b)$ corresponding to the transmitted block $\bs{\chi}(b)$ is
\begin{equation}
	\label{eq: system model}
	\bs{\mu}(b) = \alpha_b \bs{\chi}(b) + \bs{\omega}(b), ~~b=1, \ldots, B.
\end{equation}
The  fading   coefficients  $\{\alpha_b, b=1, \ldots, B\}$ are i.i.d. In the numerical results, we consider the fading coefficients to be Rayleigh distributed, with $\mathbb{E}[\alpha_b^2]=1$, but the analysis in this paper does not depend on the fading distribution. The noise vector $\bs{\omega}(b)$ consists of $N/B$ independent noise  samples which  are  complex Gaussian  distributed, $\omega(b)_n \sim \mathcal{CN}(0,2\sigma^2)$. The average signal-to-noise ratio is $\gamma=\frac{1}{2\sigma^2}$.  

The transmitted vector $\bs{\chi}$ is obtained from the information bits through a sequence of operations. Assuming a binary encoder with coding rate $R_c$, a packet of $K$ information bits is encoded into $K/R_c$ coded bits. The binary codeword is split into $K/(m R_c)$ strings each containing $m$ bits. In a standard coded communications system, the components of the transmitted vector $\bs{\chi}(b)$ are obtained by directly mapping each string of $m$ coded bits to one of $M = 2^m$ points belonging to a 1-dimensional real or complex space; the corresponding spectral efficiency $R$ in bits per channel use (bpcu) is given by $R=m R_c$. When using precoding combined with component interleaving, each string of $m$ bits is mapped to one of $M = 2^m$ points belonging to a B-dimensional real or complex space; the corresponding B-dimensional M-point constellation $\Omega_z$ is denoted $M$-$\mathcal{R}^B$ or $M$-$\mathcal{C}^B$, respectively. Denoting as $\mb{z}(n) = [z(n)_1, \ldots, z(n)_B]^T$ the B-dimensional vector that results from mapping the n-th string of $m$ coded bits, the linear precoding involves the computation 
\begin{equation}
\label{eq: precoding}
\mb{x}(n)  = P \mb{z}(n), ~ n=1, \ldots, \frac{N}{B}
\end{equation}
where $P$ is a non-singular precoding matrix of dimension $B \times B$. The precoder output vectors $\mb{x}(n) = [x(n)_1, \ldots, x(n)_B]$ belong to a B-dimensional M-point constellation $\Omega_x$ which results from a linear transformation (through $P$) of $\Omega_z$. Finally, component interleaving is applied: the n-th component of the b-th block of the transmitted vector $\bs{\chi}$ equals the b-th component of the n-th precoder output vector, i.e., $\chi(b)_n = x(n)_b$ (Fig. \ref{fig: multidimensional modulator part 1}). Hence, the $B$ components of $\mb{x}(n)$ experience independent fading when transmitted over the BF channel. Taking into account that $K$ information bits are transformed into a transmitted vector $\bs{\chi}$ with $N = K B/(m R_c)$ components, the overall spectral efficiency is $R=R_c \frac{m}{B}$ bpcu. Note that there are several ways to achieve a given spectral efficiency $R$. For example, $R=0.9$ bpcu for $B=2$ can be achieved by a coded communication system with precoding and component interleaving ($R= R_c \frac{m}{B}$) by choosing $m=3$ and $R_c=0.6$, whereas a standard communication system ($R=R_c m$) achieves $R=0.9$ bpcu for $m=1$ and $R_c=0.9$. Also other combinations of $m$ and $R_c$ are possible. Note that a standard communication system with $R=m' R_c$ can be viewed as a special case of a multidimensional modulation, where $\Omega_z$ is a Cartesian product of $B$ identical constellations belonging to $2^{m'}$-$\mathcal{R}^1$ or $2^{m'}$-$\mathcal{C}^1$ and $P=I$, where $I$ is the identity matrix, so that $\Omega_x = \Omega_z$; these B-dimensional constellations contain $M=2^{m' B}$ points. 

\begin{figure}
	\centering
	\includegraphics[width=0.75 \textwidth]{./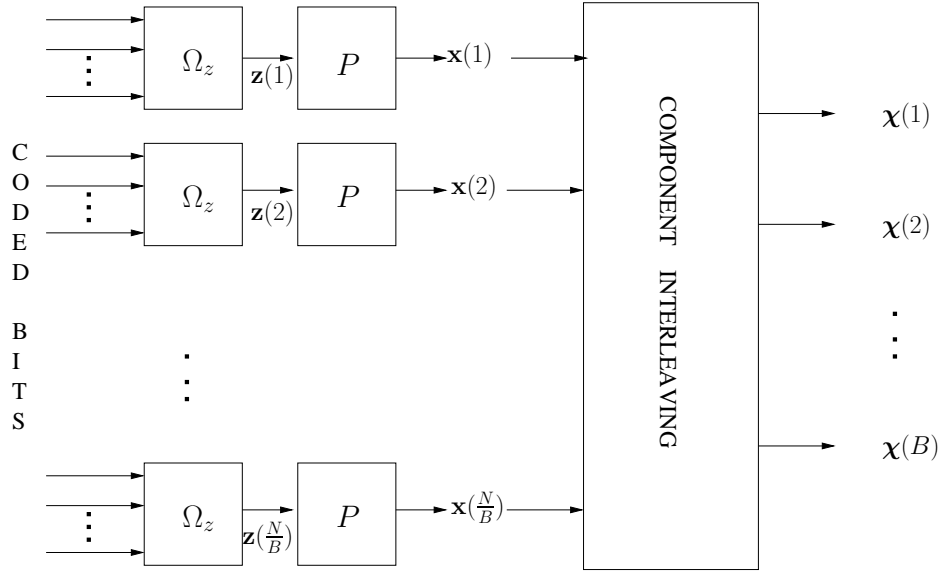}
	\caption{The coded bits are mapped to multidimensional symbols $\mb{z}(n)$ in a constellation $\Omega_z$. Each symbol $\mb{z}(n)$ is transformed to $\mb{x}(n)$, where the different components are placed in the corresponding blocks in $\bs{\chi}$.}
	\label{fig: multidimensional modulator part 1}
\end{figure}

We can reformulate Eq. (\ref{eq: system model}) in terms of $\mb{x}(n)$ as
\begin{equation}
	\label{eq: system model1b}
	\mb{y}(n) = \bs{\alpha}\cdot\mb{x}(n) + \mb{w}(n), ~~n=1, \ldots, \frac{N}{B},
\end{equation}
where $y(n)_b = \mu(b)_n$, $(\bs{\alpha})_b = \alpha_b$, $w(n)_b = \omega(b)_n$ and $\bs{\alpha}\cdot\mb{x}(n)$ denotes component-wise multiplication: $(\bs{\alpha}\cdot\mb{x}(n))_b = \alpha_b x(n)_b$.

As precoding allows to increase the spectral efficiency associated with full diversity, this technique has been extensively studied in previous works, but almost exclusively for \textit{uncoded} schemes \cite{bayer2004nac, biglieri1998fci}, \cite{boutros1998ssd}. Here, we study this system for \textit{coded} schemes and we choose the precoding matrix $P$ and the constellation $\Omega_z$ minimizing the outage probability. 

A precoding matrix $P$ that is unitary is a natural choice because it does not decrease the capacity of a Gaussian channel. 
In this paper, we restrict our study to real-valued precoding matrices, hence $P$ is orthogonal. When $B=2$, $P$ is a rotation matrix where the rotation angle  $\theta$ is the  degree of  freedom:
\begin{equation}
\label{eq: P for B=2}
	P =  \left[ \begin{array}{c c} \cos(\theta) & -\sin(\theta) \\ \sin(\theta) & \cos(\theta)  \end{array} \right].
\end{equation}
However, rotation matrices are difficult to construct for higher dimensions. In Sec. \ref{sec: Analysis of Outage Probability in the fading space}, it will be shown that for $B>2$ it is sufficient to consider orthogonal circulant precoding matrices. We denote its first row as $(p_0, \ldots, p_{B-1})$. The second row is a cyclic shift to the right of the first row, and so on. Because the columns of the $B \times B$ Fourier matrix $F$ are the eigenvectors of any circulant matrix, we can construct $P$ as follows:
\begin{equation}
\label{eq: construct fourier}
	P = F \Lambda F^H,
\end{equation}
where $(F)_{m,n} = \frac{1}{\sqrt{B}}\exp\left(\frac{-2 j \pi m n}{B} \right)$, $m,n \in \{0, \ldots, B-1 \}$, and $\Lambda$ is a diagonal matrix containing the eigenvalues of $P$. The condition for $P$ being orthogonal is $\Lambda^H \Lambda = I_B$, or the $B$ eigenvalues of $P$ must have a squared magnitude of 1. It is easy to find that 
\begin{equation}
	\lambda_n = \sum_{l=0}^{B-1}  p_{l} \exp\left(\frac{-j 2 \pi n l}{B} \right).
\end{equation}
Now, it follows that 
\begin{equation}
	p_l = \frac{1}{B} \sum_{m=0}^{B-1} \lambda_m \exp\left(\frac{j 2 \pi m l}{B} \right).
\end{equation}

As the eigenvalues must have a magnitude of 1, we have $\lambda_n = \exp(j\theta_n)$. In order to obtain a real-valued $P$, we take $\lambda_0$ real-valued (i.e., $\lambda_0 = 1$ or $\lambda_0 = -1$) and $\lambda_{B-n} = (\lambda_n)^*$ (i.e., $\theta_{B-n} = -\theta_n$) for $n = 1, \ldots, B-1$. When $B$ is even, this implies $\lambda_{B/2} = 1$ or $\lambda_{B/2} = -1$. For $B > 2$, $P$ is determined by $\lfloor(B-1)/2)\rfloor$ continuous parameters that can be optimized.

Note that for $B=3$, $P$ constructed as above, with $\lambda_0=1$ and $\lambda_1 = \exp(j \theta_1)$, corresponds to a $3$-dimensional rotation with angle $\theta_1$ around the fixed axis $\frac{1}{\sqrt{3}}(1,1,1)$,
\begin{equation}
\label{eq: P for B=3}
	P =  \frac{1}{3} \left[ \begin{array}{c c c} 
	1 + 2 k & 1-k-\sqrt{3} l & 1-k + \sqrt{3} l\\ 
	1-k +\sqrt{3} l & 1 + 2 k & 1-k-\sqrt{3} l \\
	1-k-\sqrt{3} l & 1-k + \sqrt{3} l & 1 + 2 k 
	\end{array} \right],
\end{equation}
where $k=\cos(\theta_1)$ and $l=\sin(\theta_1)$. 

Fig. \ref{fig:  rotated QAM} illustrates the effect of a rotation for $B=2$ when a $4$-$\mathcal{R}^2$ constellation is used as $\Omega_z$. The transmitted components are affected by their corresponding fading gain, which is expressed by the component-wise multiplication $\mb{t}(n) = \bs{\alpha}\cdot\mb{x}(n)$, which is shown at the right side in Fig. \ref{fig:  rotated QAM}. We say that $\mb{t}(n)$ belongs to the \textit{faded} constellation $\Omega_t$. The point $\mb{t}(n) = [t(n)_1, \ldots, t(n)_B]$ is shown for a particular fading point $\bs{\alpha}$. When $\alpha_1 \neq \alpha_2$, the constellation $\Omega_t$ can be interpreted as a distorted QPSK constellation (i.e., a constellation in which both components do not have the same magnitude). 

\begin{figure}
	\centering
	\subfigure[]{\includegraphics[width=0.45 \textwidth]{./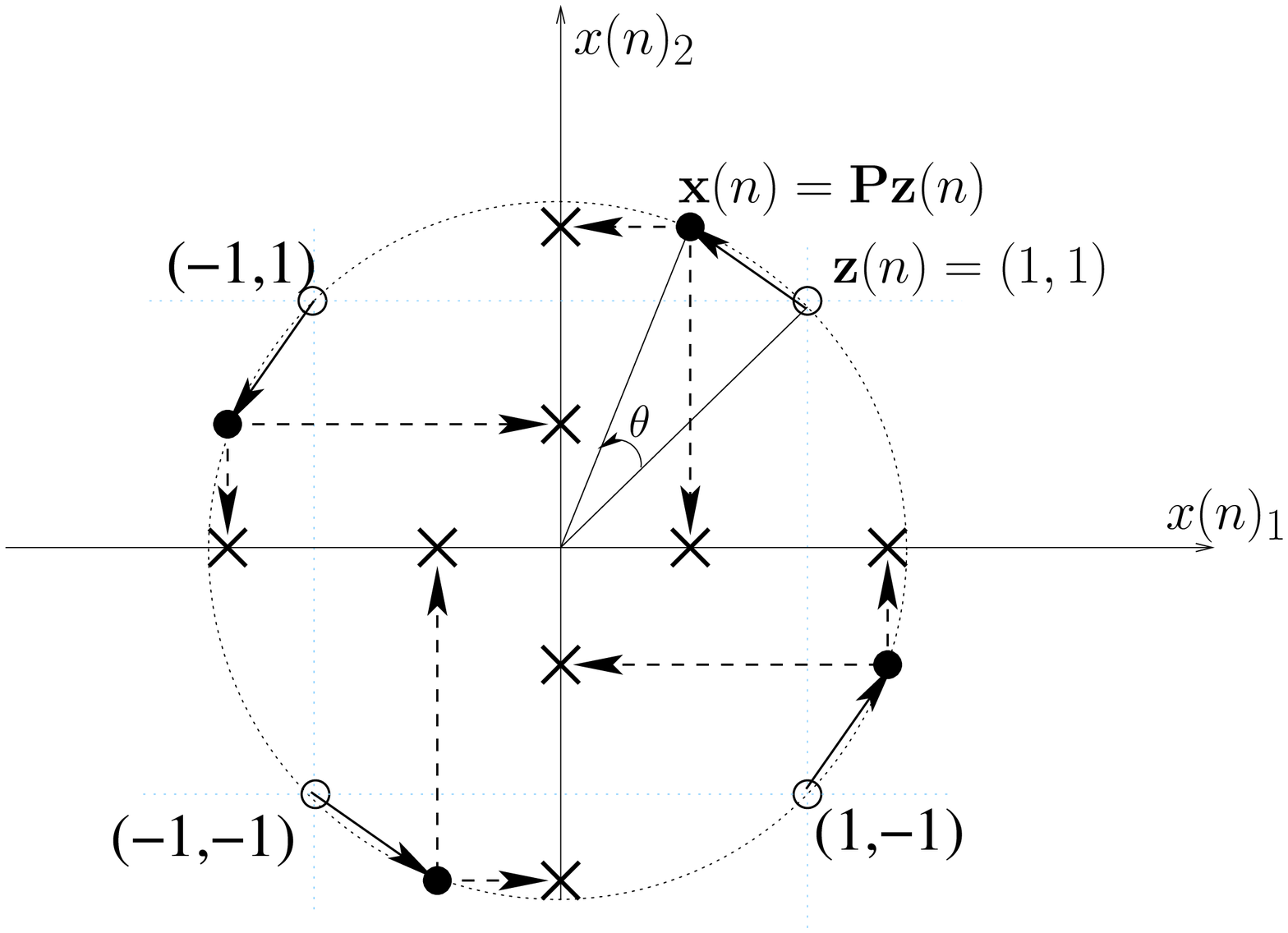}\label{fig: balanced}}
	\quad 
	 \subfigure[]{\includegraphics[width = 0.45\textwidth]{./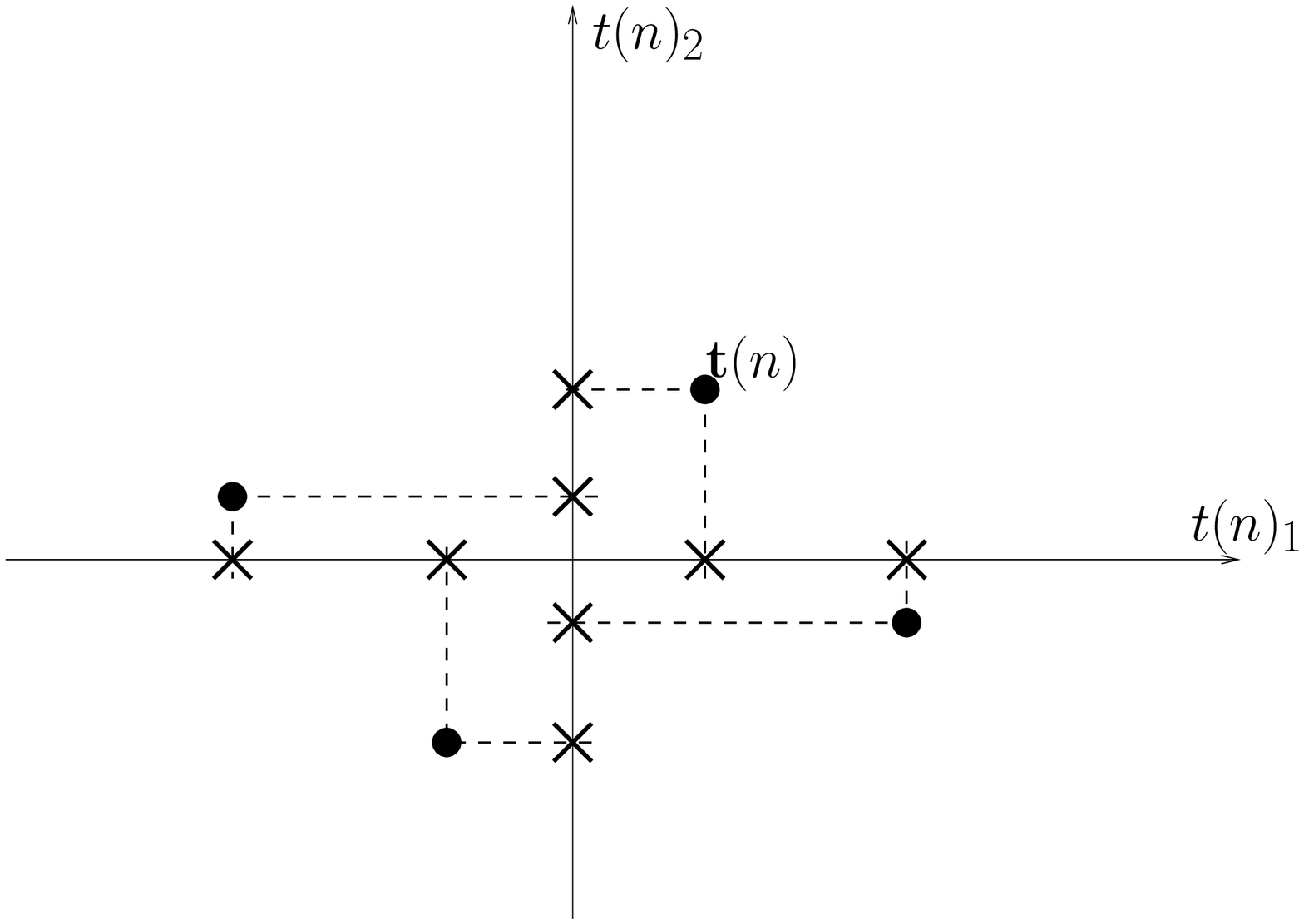}\label{fig: unbalanced}}
	\caption{Displaying the rotation at the transmitter \subref{fig: balanced} and at the receiver (without noise) \subref{fig: unbalanced} for $B=2$. 
The empty (filled) circles represent $\Omega_z$ ($\Omega_x$). The components of $\mb{t}(n)$ are obtained by scaling the components $\mb{x}(n)$ by their respective fading gain (here $\alpha_2 < \alpha_1$), as expressed by the component-wise multiplication $\bs{\alpha} \cdot \mb{x}(n)$. The crosses on the coordinate axes are the transmitted \subref{fig: balanced} and received \subref{fig: unbalanced} vector components, respectively.}
	\label{fig: rotated QAM}
\end{figure}

Considering $\mb{t}(n)$, an equivalent channel model can be formulated:
\begin{equation}
	\label{eq: system model2}
	\mb{y}(n) = \mb{t}(n) + \mb{w}(n), ~~n=1, \ldots, \frac{N}{B},
\end{equation}

which yields more insight and will be useful in the proofs of propositions of this paper. This system model represents a Gaussian vector channel with input $\mb{t}(n)$. This means that for a particular fading point, the block fading channel can be interpreted as a virtual\footnote{We use the term \textit{virtual} because the fading gains of the actual channel are incorporated in the constellation $\Omega_t$.} Gaussian channel, with a discrete input alphabet $\Omega_t$.

\section{Analysis of Outage Probability in the fading space}
\label{sec: Analysis of Outage Probability in the fading space}

For the remainder of the paper, we will drop the index $n$ in the vectors $\mb{z}(n)$, $\mb{x}(n)$, $\mb{t}(n)$, $\mb{y}(n)$ and $\mb{w}(n)$, as the time index is not important when considering mutual information. We write random variables using upper case letters corresponding to the lower case letters used for their realizations. The mutual information $I\left(\bs{\alpha}, \gamma,  P \right)$ at a certain fading point $\bs{\alpha}$ between the transmitted B-dimensional symbol $\mb{x}$ (uniformly distributed over $\Omega_x$) and the corresponding received vector $\mb{y}$ is given by
\begin{equation}
	I\left(\bs{\alpha}, \gamma, P \right) =  \frac{1}{B} I(\mb{X};\mb{Y}|\bs{\alpha}, \gamma), 
\label{eq.: mutual info}
\end{equation}
where the scaling factor $\frac{1}{B}$ is added because the $B$ blocks in the channel timeshare a time-interval \cite[Section 9.4]{cover2006eit}, \cite[Section 5.4.4]{tse2005fwc}. The mutual information $I(\mb{X};\mb{Y}|\bs{\alpha}, \gamma)$ is \cite{fab2007mcm}, \cite{fabregas2004ccb}
\begin{equation}
	I(\mb{X};\mb{Y}|\bs{\alpha}, \gamma) =  m - 2^{-m} \sum_{\mb{x} \in \Omega_x}  \E_{\mb{y}|\mb{x}} \left[  \Log_2 \left( \sum_{\mb{x}^{'} \in \Omega_x} \exp\left[ \frac{d^2(\mb{y},\bs{\alpha}\cdot\mb{x}) - d^2(\mb{y}, \bs{\alpha}\cdot\mb{x}')}{2 \sigma^2} \right] \right) \right],
\label{eq.: mutual info 2}
\end{equation}
where $d^2(\mb{v},\mb{u})= \sum_{b=1}^B \left|v_b-u_b \right|^2$. However, more insight can be gained when considering $\mb{t}$. From Eq. (\ref{eq: system model2}), it is clear that for a certain fading point, the mutual information of this virtual channel, with input $\Omega_t$, is the same as the mutual information of the actual channel, with input $\Omega_x$,
\begin{equation}
	\label{eq: mutual information tn}
	I\left(\bs{\alpha}, \gamma,  P \right) =  \frac{1}{B} I(\mb{T};\mb{Y}|\bs{\alpha}, \gamma).
\end{equation}
Hence, the fading point $\bs{\alpha}$ maximizing (minimizing) the mutual information $I(\mb{X};\mb{Y}|\bs{\alpha}, \gamma)$ corresponds to the fading point that distorts constellation $\Omega_t$ in the best (worst) way at the input of a Gaussian vector channel. This interpretation allows to exploit the many results from literature on the mutual information of Gaussian channels. Therefore, Eq. (\ref{eq: mutual information tn}) will be useful in the following. 

The outage probability is the probability that the instantaneous mutual information is less than the transmitted information bitrate $R$ \cite[section 5.4]{tse2005fwc},
\begin{equation}
\label{eq: outage probability}
P_{\textrm{out}}(\gamma, P, R) = \Prob\big(I\left(\bs{\alpha}, \gamma, P\right) < R \big).
\end{equation}
Our main goal is to find the precoding matrix $P$ that minimizes the outage probability $P_{\textrm{out}}(\gamma, P, R)$,
\begin{equation*}
P_{\textrm{opt}} = \argmin{P}~ P_{\textrm{out}}(\gamma, P, R) \label{eq: research question}.
\end{equation*}

A  closed   form   expression  for $I\left(\bs{\alpha},   \gamma,   P\right)$   does   not   exist 
and Eq. (\ref{eq.:  mutual   info 2})  is   difficult  to  analyze   because  of the presence of the mathematical expectation. Therefore, in the following two sections, we will develop simple bounds on the outage probability and will approximate $P_{\textrm{opt}}$ by the optimal precoding matrix that minimizes  the bounds on the outage probability.  

But first we will introduce a framework in this section that allows to gain insight on the meaning of the outage probability. The considered framework is the fading space \cite{bou2005aoc}, which is the B-dimensional Euclidean space, ${\mathbb{R}^+}^B$, of the real-valued positive fading gains. The outage probability equals the probability that the fading point $\bs{\alpha}$ belongs to a region $V_{\textrm{o}}$, which is such that $I\left(\bs{\alpha}, \gamma, \theta\right)  <  R$ for all $\bs{\alpha}$ in $V_{\textrm{o}}$ (Fig. \ref{fig: example outage boundary}):
\begin{equation}
	P_{\textrm{out}}(\gamma, P, R)=\int_{\bs{\alpha} \in V_{\textrm{o}}} p(\bs{\alpha}) \mathrm{d}\bs{\alpha},
\label{eq.: outage prob in space}
\end{equation}
where $p(\bs{\alpha})$ is the joint pdf of the fading gains $\alpha_1, \ldots, \alpha_B$.  We  say  that  the region  $V_{\textrm{o}}$  is  limited  by  an \textit{outage boundary} $B_{\textrm{o}}(\gamma, P, R)$ (see Fig. \ref{fig: example outage boundary} for $B=2$), defined by 
\[ I\left(\bs{\alpha}, \gamma, P\right) = R,~~ \forall~ \bs{\alpha} \in B_{\textrm{o}}(\gamma, P, R). \]
\begin{figure}
	\centering
	\subfigure[Examples of an outage boundary when $\Omega_z=$ $4$-$\mathcal{R}^2$ and the rotation angle is $\theta=0, 10, 27$ degrees. The region $V_{\textrm{o}}$ is coloured red for $\theta=27$ degrees.]{\includegraphics[width=0.40 \textwidth, angle=-90]{./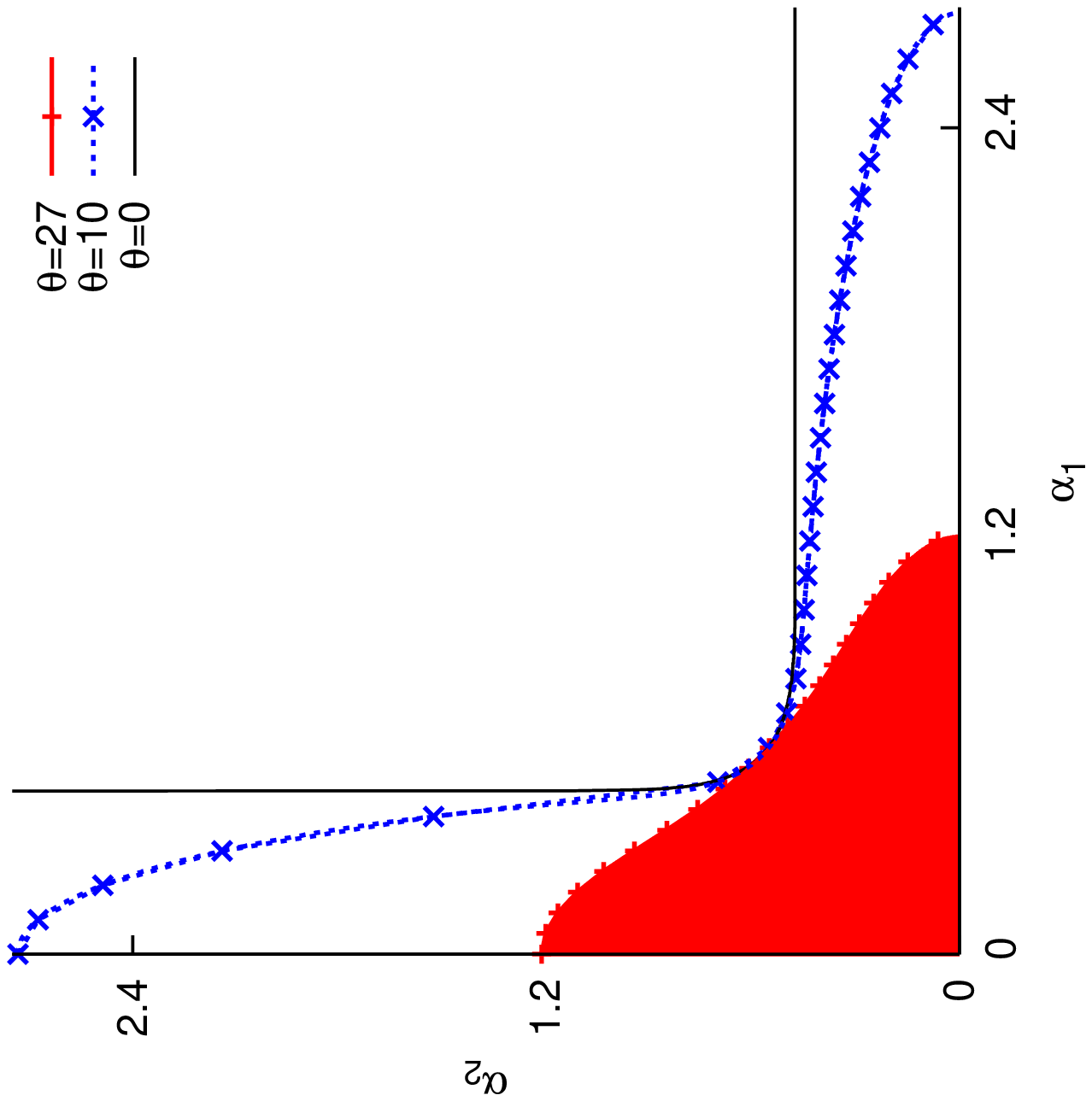}\label{fig: example outage boundary}}
	\quad 
	 \subfigure[The points $\{\alpha_{b,\textrm{o}}$, $b=1, \ldots, B\}$ and $\alpha_{\textrm{e}}$ are shown for $B=2$.]{\includegraphics[width = 0.40\textwidth, angle=-90]{./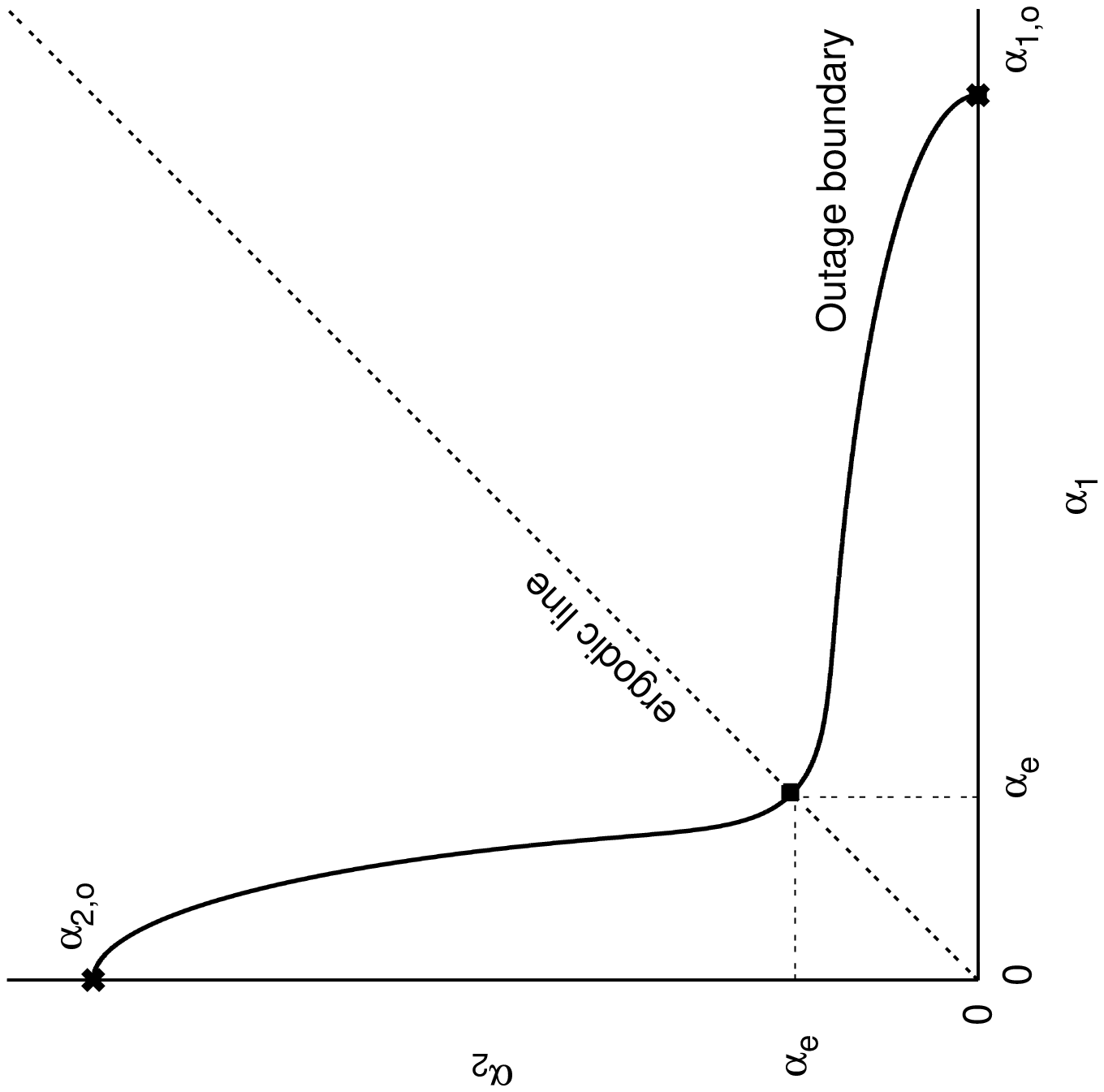}\label{fig: out bound}}
	\caption{The outage boundary limits the outage region $V_{\textrm{o}}$ in the fading space which corresponds to an information theoretic outage event. In the Figure, the information rate is $R=0.9$ bpcu and the average SNR is fixed to $\gamma=8$ dB.}
	\label{fig: outage boundary}
\end{figure}
In general, we can consider a certain boundary in the fading space limiting a certain region $V$. This boundary is denoted by $B(V)$. The outage boundary $B_{\textrm{o}}(\gamma, P, R)$ corresponds to the region $V_{\textrm{o}}$, as defined in (\ref{eq.: outage prob in space}), so that the outage boundary is $B_{\textrm{o}}(\gamma, P, R) = B(V_{\textrm{o}})$.

\begin{definition}
\label{def: alpha_bo}
We define $\alpha_{b,\textrm{o}}$ by the magnitude of the intersection between the outage boundary and the axis $\alpha_b$. More precisely,
$I\left(\bs{\alpha}\arrowvert_{\alpha_i=0, i \neq b, \alpha_b=\alpha_{b,\textrm{o}}}, \gamma, P \right) = R$. 
By convention, $\alpha_{b,\textrm{o}}=+\infty$ if the axis $\alpha_b$ is an asymptote for the outage boundary. It is possible that $\alpha_{b,\textrm{o}}$ does not exist.
\end{definition}

\begin{definition}
\label{def: alpha_e}
We define $\alpha_{\textrm{e}}$ as the value of the components of the intersection between the outage boundary and the line $\alpha_1=\ldots=\alpha_B$ (also known as the {\em ergodic line}). More precisely, $I\left(\bs{\alpha}\arrowvert_{\alpha_i=\alpha_{\textrm{e}}, i \in [1,\ldots, B]}, \gamma, P \right) = R$.
\end{definition}
The defined points are illustrated in Fig. \ref{fig: out bound} for $B=2$. In the remainder of the paper, we denote the points $\bs{\alpha}\arrowvert_{\alpha_i=0, i \neq b, \alpha_b=\alpha_{b,\textrm{o}}}$ by $\bs{\alpha}_{b,\textrm{o}}$ and $\bs{\alpha}\arrowvert_{\alpha_i=\alpha_{\textrm{e}}, i \in [1,\ldots, B]}$ by $\bs{\alpha}_{\textrm{e}}$.

\begin{proposition}
\label{Prop: symmetry}
On a BF channel with $B=2$, with a discrete input alphabet and with linear precoding, the magnitudes $\{\alpha_{b,\textrm{o}},~ b=1, 2\}$ are equal if the constellation is invariant under a rotation of 90 degrees and the precoding matrix is orthogonal.\\
On a BF channel with $B>2$, with a discrete input alphabet and with linear precoding, the magnitudes $\{\alpha_{b,\textrm{o}},~ b=1, \ldots, B\}$ are equal if the constellation is invariant under a cyclic shift of the components of the points of the constellation and the precoding matrix is an orthogonal circulant matrix.
\end{proposition}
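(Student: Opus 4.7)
The plan is to reduce the defining equation of $\alpha_{b,\textrm{o}}$ to a one-dimensional condition that depends on $b$ only through the marginal distribution of the $b$-th component of $P\mb{Z}$, and then use the stated invariance hypotheses to show that these marginals all coincide (up to a sign that leaves a scalar Gaussian mutual information invariant).

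First I would work in the equivalent channel model $\mb{y} = \bs{\alpha}\cdot\mb{x} + \mb{w}$ with $\mb{x} = P\mb{z}$ from Eq.~(\ref{eq: system model1b}). At $\bs{\alpha}_{b,\textrm{o}}$ only the $b$-th fading is nonzero, so every $Y_i$ with $i \neq b$ reduces to the pure-noise variable $W_i$, which is independent of $\mb{X}$ and contributes nothing to the mutual information. Since $P$ is invertible and $Y_b = \alpha_b (P\mb{Z})_b + W_b$ depends on $\mb{Z}$ only through the scalar $(P\mb{Z})_b$, one obtains the chain of equalities
\begin{equation*}
I(\mb{X};\mb{Y}\mid\bs{\alpha}_{b,\textrm{o}},\gamma) \;=\; I(\mb{X}; Y_b \mid \alpha_{b,\textrm{o}},\gamma) \;=\; I\bigl((P\mb{Z})_b;\, \alpha_{b,\textrm{o}} (P\mb{Z})_b + W_b\bigr).
\end{equation*}
Combined with Eq.~(\ref{eq.: mutual info}), the defining equation of $\alpha_{b,\textrm{o}}$ becomes $g_b(\alpha_{b,\textrm{o}}) = B R$, where $g_b(\alpha) := I\bigl((P\mb{Z})_b;\, \alpha (P\mb{Z})_b + W_b\bigr)$ depends on $b$ only through the distribution of the scalar random variable $(P\mb{Z})_b$.

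Next I would establish that all marginals of $P\mb{Z}$ coincide, up to sign. For $B=2$, letting $\rho$ denote the $90^\circ$ rotation, invariance of $\Omega_z$ under $\rho$ gives $\rho\mb{Z} \stackrel{d}{=} \mb{Z}$, whence $P\rho\mb{Z} \stackrel{d}{=} P\mb{Z}$; a short direct computation with $P$ as in Eq.~(\ref{eq: P for B=2}) shows $(P\rho\mb{z})_1 = -(P\mb{z})_2$ and $(P\rho\mb{z})_2 = (P\mb{z})_1$, so the marginals satisfy $(P\mb{Z})_1 \stackrel{d}{=} -(P\mb{Z})_2$. For $B>2$, let $C$ denote the cyclic-shift permutation matrix. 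All circulant matrices commute, so $PC = CP$; combining this with $C\mb{Z} \stackrel{d}{=} \mb{Z}$ yields $CP\mb{Z} = PC\mb{Z} \stackrel{d}{=} P\mb{Z}$, so the marginal distributions of $(P\mb{Z})_1, \ldots, (P\mb{Z})_B$ are identical.

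The proof concludes by using that $g_b$ is invariant under sign flip of its scalar input, since the AWGN $W_b$ is symmetric: $I(X; \alpha X + W) = I(-X; \alpha(-X) + W)$. Hence $g_1 = \cdots = g_B$, and solving the identical scalar equation $g_b(\alpha) = B R$ yields $\alpha_{1,\textrm{o}} = \cdots = \alpha_{B,\textrm{o}}$. The main point to handle carefully is that the reduction from the full vector mutual information to the scalar $g_b$ must be genuinely lossless; this hinges on independence of the noise components and invertibility of $P$, and the only remaining subtlety — the sign discrepancy in the $B=2$ case — is absorbed by the sign symmetry of the Gaussian noise.
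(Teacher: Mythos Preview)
Your argument is correct and is essentially the same approach as the paper's: reduce the condition $I(\bs{\alpha}_{b,\textrm{o}},\gamma,P)=R$ to a one-dimensional condition on the $b$-th projection of $\Omega_x=P\Omega_z$, and then use the invariance hypothesis together with the rotation/circulant structure of $P$ (via $PC=CP$, equivalently the paper's $P=CPC^T$) to show those projections coincide. The only cosmetic difference is that the paper argues pointwise --- for each $\mb{x}^{(i)}\in\Omega_x$ it exhibits another $\mb{x}^{(j)}$ with $x^{(i)}_1=x^{(j)}_2$, so the projected \emph{sets} are literally equal --- whereas you argue distributionally and absorb the leftover sign in the $B=2$ case using the symmetry of the Gaussian noise; both are valid and amount to the same mechanism.
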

\begin{proof}
See Appendix \ref{appendix: symmetry}.
\end{proof}
Notice that the condition of Prop. 1 is a sufficient condition and not a necessary condition. In the remainder of this paper, it is assumed that the constellation used at the transmitter fulfils Prop. \ref{Prop: symmetry}. The magnitudes $\{ \alpha_{b,\textrm{o}}, b=1,\ldots, B \}$ will then simply be denoted by $\alpha_{\textrm{o}}$. This also means that the projection of $\Omega_x$ on either coordinate axes yields the same set of points, which we denote by $\mathcal{S}_{\textrm{p}}$, where $p$ stands for projection. 

Note that a multidimensional constellation that fulfils Prop. \ref{Prop: symmetry} (i.e., its projection on each coordinate axis yields the same set of points, see Appendix \ref{appendix: symmetry}) has an interesting property. For these constellations, the function $I\left(X_b; Y_b | \alpha_b = \alpha ,\gamma\right)$, which is the mutual information of a point-to-point channel with fading coefficient $\alpha_b =\alpha$, average SNR $\gamma$ and with discrete input $X_b$, does not depend on $b$. As a consequence, we will represent this mutual information by $I_{\mathcal{S}_{\textrm{p}}}\left(\alpha^2 \gamma, P\right)$.

\begin{definition}
\label{def: upper bound}
Boundary $B_1=B(V_1)$ is said to outer bound outage boundary $B_2=B(V_2)$
if $V_2 \subseteq V_1$. The opposite holds for inner bounding.
\end{definition}

\noindent The outage boundary in the fading space has a simple but interesting property:
if an outage boundary \textit{outer bounds (inner bounds)} another outage boundary, than its corresponding outage probability is larger (smaller) (see Eq. (\ref{eq.: outage prob in space})). For example, consider the input alphabet $\mb{Z} \sim \mathcal{N}(\mb{0}, I)$, which we denote as an i.i.d. Gaussian input alphabet\footnote{We mainly refer to real-valued i.i.d. Gaussian input alphabets in this paper. In Sec. \ref{sec: optimization complex}, we extend this to complex  i.i.d. Gaussian input alphabets.}, where $\mb{0}$ is a column vector of zeros and $I$ is the identity matrix. We denote the outage boundary (outage region) corresponding to i.i.d. Gaussian inputs and discrete input alphabets, by $B_{\textrm{o}}\textrm{(Gauss)}$ ($V_{\textrm{o}}\textrm{(Gauss)}$) and $B_{\textrm{o}}\textrm{(discrete)}$ ($V_{\textrm{o}}\textrm{(discrete)}$), respectively. The boundary $B_{\textrm{o}}\textrm{(Gauss)}$ inner bounds $B_{\textrm{o}}\textrm{(discrete)}$ \cite{cover2006eit}. Therefore, the outage probability corresponding to i.i.d. Gaussian inputs is a lower bound on the outage probability corresponding to a discrete input alphabet. Consequently, by minimizing the outage probability corresponding to a discrete input alphabet, $B_{\textrm{o}}\textrm{(discrete)}$ can at most approach $B_{\textrm{o}}\textrm{(Gauss)}$ (see Fig. \ref{fig: out bound approach Gauss}). 

\begin{figure}[!htb]
   \centering
   \includegraphics[width=0.6 \textwidth, angle=-90]{./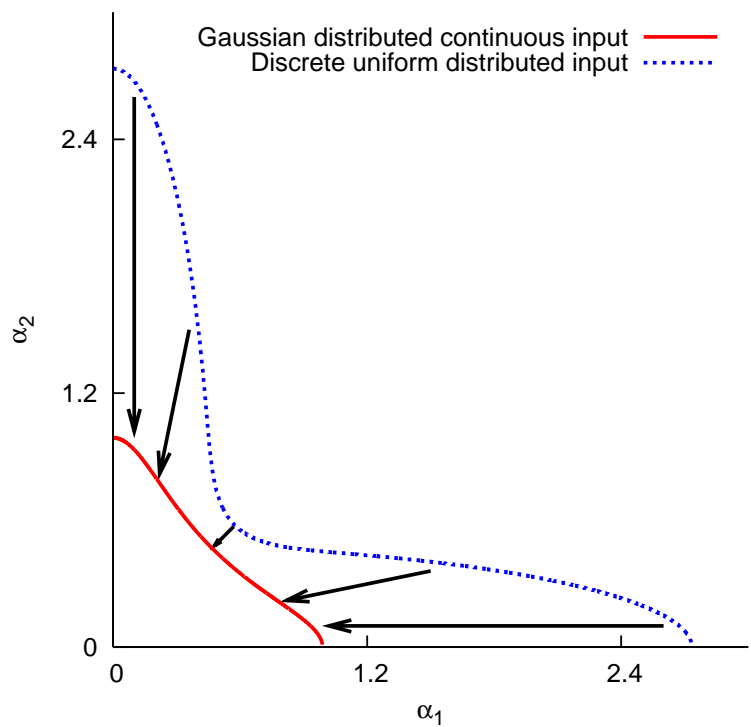} 
 	\caption{The outage boundary $B_{\textrm{o}}\textrm{(discrete)}$ is inner bounded by $B_{\textrm{o}}\textrm{(Gauss)}$. Optimizing the precoding matrix can at most make $B_{\textrm{o}}\textrm{(discrete)}$ approach $B_{\textrm{o}}\textrm{(Gauss)}$ as illustrated by the arrows. }  
	\label{fig: out bound approach Gauss}
\end{figure}

In the following two sections, we will determine boundaries with simple shapes outer (inner) bounding $B_{\textrm{o}}\textrm{(discrete)}$, which are then much easier to optimize. For the outer boundary, this can be done by determining a surface in the fading space, $U(\bs{\alpha})=0$, satisfying
\begin{equation}
\label{eq: upper bounding function}
I\left(\bs{\alpha}, \gamma, P\right) \geq R, ~\textrm{for all } \bs{\alpha} \textrm{ satisfying } U(\bs{\alpha})=0.
\end{equation}
For the inner boundary, the greater than or equal sign is replaced by a smaller than or equal sign. For example, for $B=2$, we will prove that a circular arc touching $B_{\textrm{o}}\textrm{(discrete)}$ at  $\alpha_1=\alpha_{\textrm{o}}$  (on the horizontal axis) and $\alpha_2=\alpha_{\textrm{o}}$  (on the vertical axis) satisfies Eq. (\ref{eq: upper bounding function}). 

\section{Bounds on the outage probability without linear precoding}
\label{sec: Bounds on the outage probability without linear precoding}

As a first step, we will establish upper and lower bounds on the outage probability of a communication system without linear precoding. This will set the stage for the bounds with linear precoding in Sec. \ref{sec: linear precoding}. 

We will prove that the outage region is outer bounded by a $B$-hypersphere touching the outage boundary on the axes of the fading space, hence with radius $\alpha_{\textrm{o}}$. A $B$-hypersphere $U(\bs{\alpha})=0$ is a generalization of a sphere to $B$ dimensions,
\[ U(\bs{\alpha}) = \sum_{b=1}^B \alpha_b^2 - \alpha_{\textrm{o}}^2. \] Note that this is only possible for constellations fulfilling the conditions of Prop. \ref{Prop: symmetry}. For other constellations, the outer bounding $B$-hypersphere will have a radius of $\max_{b \in [1, \ldots, B]}\alpha_{b,\textrm{o}}$ and will therefore be less tight. 

Next, we will prove that the outage region is inner bounded by a $B$-hypersphere touching the outage boundary at $\bs{\alpha}_{\textrm{e}}$.

\begin{lemma}
\label{Lemma: Cartesian}
On a BF channel with a discrete multidimensional input alphabet $\mb{X}$, the mutual information $I\left(\bs{\alpha},\gamma,P \right)$ is upper bounded as follows 
\begin{equation}
	I\left(\bs{\alpha},  \gamma,  P \right)  \leq \frac{1}{B}\sum_{b=1}^B I_{\mathcal{S}_{\textrm{p}}}\left(\alpha_b^2 \gamma, P\right).
\label{eq.: mutual info theta}
\end{equation}
\end{lemma}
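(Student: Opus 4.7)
The plan is to combine a standard subadditivity bound on vector mutual information with the per-axis projection property guaranteed by Proposition~\ref{Prop: symmetry}. The key observation is that the channel in Eq. (\ref{eq: system model1b}) decouples component-wise when we condition on $\mb{X}$: each $Y_b=\alpha_b X_b + W_b$ depends only on $X_b$, and the noise samples $W_b$ are independent. So the vector channel with input $\mb{X}$ looks, after conditioning, like $B$ parallel scalar channels, except that the coordinates of $\mb{X}$ are statistically dependent through the finite constellation $\Omega_x$. This dependence can only reduce the output entropy, which is exactly what produces the $\leq$ in (\ref{eq.: mutual info theta}).

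Concretely, I would first write $B\, I(\bs{\alpha},\gamma,P)=I(\mb{X};\mb{Y}\mid\bs{\alpha},\gamma)=h(\mb{Y}\mid\bs{\alpha})-h(\mb{Y}\mid\mb{X},\bs{\alpha})$. For the conditional term, I would use the independence of the noise components together with $Y_b=\alpha_b X_b+W_b$ to conclude
\begin{equation*}
h(\mb{Y}\mid\mb{X},\bs{\alpha})=\sum_{b=1}^{B} h(Y_b\mid X_b,\alpha_b).
\end{equation*}
For the unconditional term, I would apply subadditivity of differential entropy, $h(\mb{Y}\mid\bs{\alpha})\leq\sum_b h(Y_b\mid\alpha_b)$. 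Subtracting yields the standard bound
\begin{equation*}
I(\mb{X};\mb{Y}\mid\bs{\alpha},\gamma)\leq\sum_{b=1}^{B}I(X_b;Y_b\mid\alpha_b,\gamma).
\end{equation*}

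The second step is to recognize each marginal term as $I_{\mathcal{S}_{\textrm{p}}}(\alpha_b^2\gamma,P)$. Here I would invoke the standing assumption (from Proposition~\ref{Prop: symmetry}) that $\Omega_x$ is invariant under the coordinate permutation that cyclically shifts components (or under a $90^\circ$ rotation for $B=2$). Because $\mb{X}$ is uniform on $\Omega_x$, this symmetry implies that the marginal law of $X_b$ is the same for every $b$, supported on the common projection set $\mathcal{S}_{\textrm{p}}$. Consequently $I(X_b;Y_b\mid\alpha_b=\alpha,\gamma)$ depends on $b$ only through $\alpha_b$, and by the very definition of $I_{\mathcal{S}_{\textrm{p}}}$ given just after Proposition~\ref{Prop: symmetry} it equals $I_{\mathcal{S}_{\textrm{p}}}(\alpha^2\gamma,P)$. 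Substituting and dividing by $B$ gives the desired inequality.

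The only nontrivial point to verify carefully is the identification of the marginal of $X_b$ with the object used to define $I_{\mathcal{S}_{\textrm{p}}}$: one needs the projected constellations to agree not only as sets but as distributions, which is precisely what the symmetry hypothesis provides. Beyond this, the proof is essentially the textbook argument that a memoryless vector channel has mutual information upper bounded by the sum of per-component mutual informations, with equality iff the input components are independent, which they generally are not for a jointly designed $\Omega_x$.
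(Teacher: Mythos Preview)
Your proof is correct and reaches the same inequality as the paper, but via a slightly different (and arguably cleaner) decomposition. The paper applies the chain rule to $I(\mb{X};\mb{Y})$ in the $Y$-variables, obtaining $\sum_b I(Y_b;\mb{X}\mid Y_1,\ldots,Y_{b-1})$, then uses the Markov structure $Y_b\perp \mb{X}_{\sim b}\mid X_b$ to replace $\mb{X}$ by $X_b$, and finally invokes ``conditioning reduces entropy'' term by term to drop the conditioning on $Y_{<b}$. You instead go straight to $h(\mb{Y})-h(\mb{Y}\mid\mb{X})$, split the conditional term exactly (same Markov structure), and apply subadditivity of differential entropy once. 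The two arguments are equivalent at the level of ingredients---both rest on noise independence across components and on $h(Y_b\mid Y_{<b})\le h(Y_b)$---so neither buys anything the other lacks; your packaging is just more compact. Your closing remark about needing the marginals of $X_b$ to agree as \emph{distributions}, not merely as supports, is a useful clarification that the paper leaves implicit when invoking $I_{\mathcal{S}_{\textrm{p}}}$.
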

\begin{proof}
See Appendix \ref{app: lemma}.
\end{proof}

A special case is the case without precoding ($P=I$) so that $\Omega_x = \Omega_z$. If in that case $\Omega_z$ is the Cartesian product of $\Omega_{z_1}$, \ldots, $\Omega_{z_B}$ \footnote{We denote the projection of $\Omega_z$ on the $b$-th coordinate axis as $\Omega_{z_b}$. For constellations fulfilling Prop. \ref{Prop: symmetry}, $\Omega_{z_b} = \mathcal{S}_{\textrm{p}}$.}, $\Omega_z = (\mathcal{S}_{\textrm{p}})^B$, then all variables $X_b=Z_b$ are independent for all $b$, so that equality holds in Eq. (\ref{eq.: mutual info theta}). E.g., $\Omega_z =$ $4$-$\mathcal{R}^2$, as shown in Fig. \ref{fig: rotated QAM}, is the Cartesian product of two BPSK constellations, denoted as BPSK$\times$BPSK. 

For an i.i.d. Gaussian input alphabet, the instantaneous mutual information with linear precoding is the same as the instantaneous mutual information without linear precoding, because $P \mb{z} = \mb{x} \sim \mathcal{N}(\mb{0}, I)$ when $P$ is orthogonal, so that precoding does not change the distribution of the input vector. Therefore, the instantaneous mutual information for an i.i.d. Gaussian input alphabet is
\begin{equation}
	I\left(\bs{\alpha}, \gamma,  P\right) = I\left(\bs{\alpha},  \gamma, P=I\right) = \frac{1}{B} \sum_{b=1}^B 0.5 ~ \Log_2(1+ 2 \gamma \alpha_b^2).
\label{eq.: mutual info Gauss}
\end{equation}

\begin{proposition}
\label{bound: gauss}
The outage region $V_{\textrm{o}}\textrm{(Gauss)}$ of a BF channel with an i.i.d. Gaussian input alphabet is outer bounded by the $B$-hypersphere $\alpha_1^2+\alpha_2^2 + \ldots + \alpha_B^2 = \frac{4^{BR}-1}{2 \gamma}$ and  inner  bounded  by  the $B$-hypersphere  $\alpha_1^2+\alpha_2^2 + \ldots + \alpha_B^2  = \frac{4^R-1}{2 \gamma}$.
\end{proposition}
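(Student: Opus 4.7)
The plan is to translate the outage event into a purely algebraic inequality on the fading magnitudes and then sandwich the resulting sublevel set between two spheres. Starting from Eq.~(\ref{eq.: mutual info Gauss}), which holds independently of the precoder $P$ (an orthogonal $P$ preserves the i.i.d.\ Gaussian law), I would rewrite the outage event $I(\bs{\alpha},\gamma,P) < R$ as
\begin{equation*}
\prod_{b=1}^B \bigl(1 + 2\gamma\alpha_b^2\bigr) < 4^{BR}.
\end{equation*}
Setting $u_b := 2\gamma\alpha_b^2 \ge 0$, the entire task reduces to relating this product to $\sum_b u_b$, since $\sum_b u_b = 2\gamma\sum_b\alpha_b^2$ is exactly the squared radius (up to the factor $2\gamma$) appearing in the two hypersphere equations.

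For the outer bound I would use the elementary expansion $\prod_b(1+u_b) \ge 1 + \sum_b u_b$, valid because the omitted cross-terms are non-negative. Combined with the outage inequality, this forces $1 + \sum_b u_b \le \prod_b(1+u_b) < 4^{BR}$, hence $\sum_b \alpha_b^2 < (4^{BR}-1)/(2\gamma)$. Therefore every outage fading vector lies strictly inside the stated large hypersphere, i.e.\ $V_{\textrm{o}}(\textrm{Gauss})$ is contained in the outer ball.

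For the inner bound I would apply AM--GM (equivalently Jensen's inequality for the concave function $\log$) in the opposite direction, obtaining $\prod_b(1+u_b) \le \bigl(1 + \tfrac{1}{B}\sum_b u_b\bigr)^B$. If $\sum_b \alpha_b^2 \le (4^R-1)/(2\gamma)$, then $\tfrac{1}{B}\sum_b u_b \le (4^R-1)/B$, so the product is at most $(1 + (4^R-1)/B)^B$. It only remains to verify $(1+(4^R-1)/B)^B \le 4^{BR}$; taking $B$-th roots this collapses to the trivial $(4^R-1)/B \le 4^R-1$, which holds for every $B\ge 1$. Hence $I(\bs{\alpha},\gamma,P) \le R$ everywhere on the small ball, so the small hypersphere lies inside $V_{\textrm{o}}(\textrm{Gauss})$.

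No significant obstacle is expected. The only point requiring care is selecting the correct orientation of AM--GM for each half: the outer bound needs a \emph{lower} bound on $\prod_b(1+u_b)$ in terms of $\sum_b u_b$, furnished by simple expansion, whereas the inner bound needs an \emph{upper} bound on the same product, furnished by concavity of $\log$. Once both inequalities are in place, the sphere identifications are immediate.
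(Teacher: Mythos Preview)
Your argument is correct and, underneath the algebra, uses precisely the two inequalities the paper uses: your expansion $\prod_b(1+u_b)\ge 1+\sum_b u_b$ is the paper's bound $\sum_b F(\gamma_b)\ge F\bigl(\sum_b\gamma_b\bigr)$ for the concave function $F(x)=\tfrac12\log_2(1+2x)$ with $F(0)=0$, and your AM--GM is Jensen's inequality applied to $\log$. The paper's presentation differs only in that it states both facts for a generic concave $F$ with $F(0)=0$, which lets the identical argument be recycled verbatim for Proposition~\ref{bound: bpsk} (discrete inputs, via the MMSE relation); your version is tailored to the Gaussian formula. One small remark: your AM--GM step actually delivers the tighter inner sphere $\sum_b\alpha_b^2 = B(4^R-1)/(2\gamma)$ directly (set $\tfrac1B\sum_b u_b = 4^R-1$ so the product hits $4^{BR}$ exactly, matching the $B\alpha_{\textrm{e}}^2$ of Proposition~\ref{bound: bpsk}); your final inequality $(1+(4^R-1)/B)^B\le 4^{BR}$ gives away a factor of $B$ in the squared radius, though the result still suffices for the radius as stated in the proposition.
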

\begin{proof}
In Appendix \ref{app: Proof of Prop. bound: gauss}, it is proved that for $\bs{\alpha}$ on a hypersurface:
\begin{itemize}
	\item $I\left(\bs{\alpha}, \gamma\right)\arrowvert_{\alpha_1=\ldots=\alpha_B} \geq I\left(\bs{\alpha}, \gamma\right)$ 
	\item $I\left(\bs{\alpha}, \gamma\right)\arrowvert_{\alpha_b=\alpha_{\textrm{o}}} \leq I\left(\bs{\alpha}, \gamma\right)$,
\end{itemize}
because the mutual information is a sum of concave functions of the instantaneous SNRs $\gamma \alpha_b^2$ (Eq. (\ref{eq.: mutual info Gauss})). Calculating $\alpha_{\textrm{o}}$ and $\alpha_{\textrm{e}}$, according to Definitions \ref{def: alpha_bo} \& \ref{def: alpha_e}, yields the radii of both $B$-hyperspheres.
\end{proof}
The inner boundary touches the outage boundary in the point $\bs{\alpha}_{\textrm{e}}$, so that it does not depend on $B$ (the entire codeword is affected by the same fading gain $\alpha_{\textrm{e}}$). The outer boundary touches the outage boundary in the points $\bs{\alpha}_{b,\textrm{o}}$, $\forall ~b$; its dependence on $B$ follows from the fact that $B-1$ is equal to the number of erased channels. 

\begin{proposition}
\label{bound: bpsk}
On a BF channel with a discrete input alphabet $\Omega_z$ that is a Cartesian product of one-dimensional constellations, the outage region $V_{\textrm{o}}\textrm{(discrete)}$ is outer bounded by the $B$-hypersphere $\alpha_1^2+\alpha_2^2 + \ldots + \alpha_B^2 = \alpha_{\textrm{o}}^2$ touching it at the axes of the fading space at $\bs{\alpha}_{b,\textrm{o}}$, $\forall ~b$. Also, $V_{\textrm{o}}\textrm{(discrete)}$ is inner bounded by the $B$-hypersphere $\alpha_1^2+\alpha_2^2 + \ldots + \alpha_B^2  = B\alpha_{\textrm{e}}^2$ touching it at $\bs{\alpha}_{\textrm{e}}$.
\end{proposition}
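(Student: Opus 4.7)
The plan is to mirror the structure of the proof of Proposition \ref{bound: gauss}, replacing the closed-form Gaussian mutual information by its discrete-input counterpart $I_{\mathcal{S}_{\textrm{p}}}$. Because $\Omega_z$ is a Cartesian product $(\mathcal{S}_{\textrm{p}})^B$ and the absence of precoding ($P=I$) yields $\Omega_x=\Omega_z$, the remark following Lemma \ref{Lemma: Cartesian} says that Eq.~(\ref{eq.: mutual info theta}) holds with equality, so
\begin{equation*}
I(\bs{\alpha},\gamma,I)=\frac{1}{B}\sum_{b=1}^{B} f(\alpha_b^2), \qquad f(\beta):=I_{\mathcal{S}_{\textrm{p}}}(\beta\gamma,I).
\end{equation*}
Hence the level set $\{I=R\}$ is governed by the scalar function $f$, which is strictly increasing in $\beta$, strictly concave, and satisfies $f(0)=0$ (standard facts about the mutual information of a scalar Gaussian channel with a fixed discrete input).

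For the outer bound, I would consider an arbitrary hypersphere $\sum_b\alpha_b^2=r^2$ in the positive orthant and minimize $\frac{1}{B}\sum_b f(\alpha_b^2)$ on it. Since $\sum_b f(\beta_b)$ is concave on the simplex $\{\bs{\beta}\geq 0:\sum_b\beta_b=r^2\}$, its minimum is attained at an extreme point, i.e., at a vertex where one $\beta_b=r^2$ and all others vanish. By the symmetry assumption on $\Omega_z$ (Prop.~\ref{Prop: symmetry}), all $B$ such vertices give the same value $\frac{1}{B}f(r^2)$. Choosing $r=\alpha_{\textrm{o}}$ puts us at the axis points $\bs{\alpha}_{b,\textrm{o}}$, where Definition \ref{def: alpha_bo} gives $\frac{1}{B}f(\alpha_{\textrm{o}}^2)=R$; monotonicity of $f$ then yields $I(\bs{\alpha},\gamma,I)>R$ on every hypersphere with $r>\alpha_{\textrm{o}}$, so $V_{\textrm{o}}(\textrm{discrete})$ is contained in the ball of radius $\alpha_{\textrm{o}}$ and touches it at each $\bs{\alpha}_{b,\textrm{o}}$.

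For the inner bound I would maximize on the same hypersphere. By Jensen applied to the concave $f$,
\begin{equation*}
\frac{1}{B}\sum_{b=1}^B f(\alpha_b^2)\leq f\!\left(\frac{1}{B}\sum_{b=1}^B \alpha_b^2\right) = f(r^2/B),
\end{equation*}
with equality iff all $\alpha_b^2$ coincide, i.e., on the ergodic line $\alpha_1=\cdots=\alpha_B$. Equating $f(r^2/B)$ with $R$ and invoking Definition \ref{def: alpha_e} identifies $r^2=B\alpha_{\textrm{e}}^2$, so every fading point strictly inside the hypersphere through $\bs{\alpha}_{\textrm{e}}$ satisfies $I<R$ and lies in $V_{\textrm{o}}(\textrm{discrete})$.

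The main obstacle is the clean invocation of the two structural facts: (i) the strict concavity and monotonicity of $I_{\mathcal{S}_{\textrm{p}}}(\cdot)$ as a function of SNR, which can be cited from the standard result that mutual information is concave in SNR for a fixed input distribution on a scalar Gaussian channel; and (ii) the equality in Lemma \ref{Lemma: Cartesian}, which hinges on the Cartesian-product hypothesis on $\Omega_z$ together with $P=I$ (without this hypothesis one would only get the one-sided bound of the Lemma, and the inner-bound direction would fail). Beyond these, the argument is the same Schur-concavity bookkeeping used to prove Proposition \ref{bound: gauss}, with the explicit logarithm replaced by the abstract concave $f$.
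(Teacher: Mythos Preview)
Your proposal is correct and follows essentially the same route as the paper: invoke the equality case of Lemma~\ref{Lemma: Cartesian} to reduce to a sum of identical concave one-dimensional functions, then reuse the Jensen/vertex argument from the proof of Proposition~\ref{bound: gauss}. The only difference is cosmetic: where you cite concavity of $I_{\mathcal{S}_{\textrm{p}}}(\cdot)$ as a ``standard fact,'' the paper derives it explicitly from the I--MMSE relation of Guo, Shamai and Verd\'u, observing that the derivative equals half the MMSE, which is decreasing in SNR.
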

\begin{proof}
Lemma \ref{Lemma: Cartesian} proved that the mutual information is upper bounded by $\frac{1}{B}\sum_{b=1}^B I_{\mathcal{S}_{\textrm{p}}}\left(\alpha_b^2 \gamma, P\right)$, where this upper bound coincides with the exact expression in the case that $\Omega_z$ is a Cartesian product. Using the relation between the mutual information $I(\textrm{SNR}) = I_{\mathcal{S}_{\textrm{p}}}\left(\alpha_b^2 \gamma, P\right)$ and the minimum mean-square error (MMSE) in estimating the input symbol, $X_b \in \mathcal{S}_{\textrm{p}}$, given the output symbol $Y_b$ \cite{guo2005it},
\begin{equation}
\label{eq: mmse I}
\frac{\textrm{d}}{\textrm{d SNR}} I(\textrm{SNR}) = \frac{1}{2} \textrm{MMSE}(\textrm{SNR}), 
\end{equation}
it is easily proved that $I_{\mathcal{S}_{\textrm{p}}}\left(\alpha_b^2 \gamma, P\right)$ is a concave function (the second derivative is negative) of the instantaneous SNR $\gamma \alpha_b^2$, because the MMSE is a decreasing function of the SNR. Therefore, the proof is the same as for Prop. \ref{bound: gauss}. 
\end{proof}

Notice that the techniques of the proofs of Props. \ref{bound: gauss} and \ref{bound: bpsk} cannot be used for discrete input alphabets $\Omega_z$ that are not a Cartesian product of one-dimensional constellations, because in that case the upper bound $\frac{1}{B}\sum_{b=1}^B I_{\mathcal{S}_{\textrm{p}}}\left(\alpha_b^2 \gamma, P\right)$ does not coincide with the exact expression of the mutual information. However, this case is merely a particular case ($P=I$) of a \textit{precoded} discrete input alphabet, which is covered in the next section. 

\section{Bounds on the outage probability with linear precoding}
\label{sec: linear precoding}

Propositions \ref{bound:  gauss} and \ref{bound: bpsk} mainly state that the outage  region for a  channel with  an i.i.d. Gaussian input alphabet or  a discrete input alphabet without precoding is outer (inner) bounded by a $B$-hypersphere touching it at $\bs{\alpha}_{b,\textrm{o}}, ~ \forall ~b$ ($\bs{\alpha}_{\textrm{e}}$). We conjecture that this property still  holds for a communication system with a discrete alphabet with linear precoding at the input of the channel. First, we will give new detailed proofs for low and high instantaneous SNR of this property. Then, a more intuitive explanation will be given to provide more insight.   

\begin{proposition}
\label{bound: bpsk with precoding low SNR}
On a BF channel at low instantaneous SNR, with a discrete input alphabet and with linear precoding, the outage region $V_{\textrm{o}}\textrm{(discrete)}$ is outer (inner) bounded by the $B$-hypersphere $\alpha_1^2+\alpha_2^2 + \ldots + \alpha_B^2 = \alpha_{\textrm{o}}^2$ ($\alpha_1^2+\alpha_2^2 + \ldots + \alpha_B^2 = B \alpha_{e}^2$). 
\end{proposition}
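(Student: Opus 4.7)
The approach I would take is to exploit the I-MMSE identity (Eq. (\ref{eq: mmse I})) to obtain a first-order Taylor expansion of the instantaneous mutual information in the instantaneous SNRs $\gamma \alpha_b^2$, and then to observe that the leading-order term depends on $\bs{\alpha}$ only through $\sum_b \alpha_b^2$. This forces the outage boundary to be asymptotically a $B$-hypersphere, independently of the choice of precoding matrix $P$, which is the strongest possible version of the statement.

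First I would apply the vector extension of the I-MMSE relation to the virtual Gaussian vector channel of Eq. (\ref{eq: system model2}), with input $\bs{\alpha}\cdot\mb{X}$ and $\mb{X} = P\mb{Z}$. Since the MMSE matrix at zero instantaneous SNR equals the input covariance, a first-order expansion in $\gamma$ gives
\[
 I(\mb{X}; \mb{Y}|\bs{\alpha}, \gamma) = c\,\gamma \sum_{b=1}^B \alpha_b^2 \left[\textrm{Cov}(\mb{X})\right]_{bb} + O\!\left(\gamma^2 \|\bs{\alpha}\|^4\right),
\]
for an absolute constant $c$ fixed by the real/complex convention of Sec. \ref{sec_system_model}. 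The power normalization $\mathbb{E}[|x(n)_b|^2]=1$ gives $[\textrm{Cov}(\mb{X})]_{bb}=1$ for every $b$, so that in the low-SNR regime $I(\bs{\alpha},\gamma,P) \approx (c/B)\,\gamma\sum_{b=1}^B \alpha_b^2$, independently of the precoder $P$ and depending on $\bs{\alpha}$ only through its squared norm.

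Next I would use this linearization to trace out the outage boundary $\{I = R\}$, which then coincides to leading order with the $B$-hypersphere $\sum_b \alpha_b^2 = B R/(c\gamma)$. Specializing Definitions \ref{def: alpha_bo} and \ref{def: alpha_e} to this closed form yields $\alpha_{\textrm{o}}^2 = B R/(c\gamma) = B\alpha_{\textrm{e}}^2$. Hence the outer bounding $B$-hypersphere of radius $\alpha_{\textrm{o}}$ and the inner bounding $B$-hypersphere of radius $\sqrt{B}\alpha_{\textrm{e}}$ are the same sphere, both coinciding with $B_{\textrm{o}}(\textrm{discrete})$ in the low-SNR regime, which proves the proposition.

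The main obstacle I anticipate is making the low-instantaneous-SNR regime precise: one needs to control the Taylor remainder uniformly in $\bs{\alpha}$ over the portion of the fading space adjacent to the outage boundary, and to justify the vector form of the I-MMSE identity (e.g. via the Palomar--Verd\'u matrix I-MMSE relation) for non-Gaussian discrete inputs. Once the linear expansion is granted, the remainder of the argument is purely algebraic and follows directly from the cyclic symmetry enforced by Prop. \ref{Prop: symmetry} together with the orthogonality of $P$, which together guarantee that the diagonal of $\textrm{Cov}(\mb{X})$ is the all-ones vector.
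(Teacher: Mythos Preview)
Your proposal is correct and reaches the same conclusion as the paper --- at low instantaneous SNR the outage boundary itself is a $B$-hypersphere, so that the outer and inner bounding hyperspheres coincide --- but you arrive there by a different route.

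The paper's proof (Appendix~\ref{app: proof prop. precoding low SNR}) works directly with the explicit mutual-information expression Eq.~(\ref{eq.: mutual info 2}): it rewrites the argument of the exponential in terms of $\gamma$, expands both the exponential and the logarithm as Taylor series in the small quantities $\gamma\alpha_b^2$, carries out the expectation over the noise, and after some calculus obtains
\[
I(\bs{\alpha},\gamma,P)=\frac{\gamma\sum_b \alpha_b^2\,\mathrm{Var}(X_b)}{B\log 2}+\sum_b o(\gamma\alpha_b^2).
\]
It then invokes the symmetry of Prop.~\ref{Prop: symmetry} (equal projections of $\Omega_x$ on each axis) to argue that $\mathrm{Var}(X_b)$ is independent of $b$, whence the leading term depends on $\bs{\alpha}$ only through $\sum_b\alpha_b^2$.

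Your approach replaces this hands-on expansion by the vector I--MMSE relation, reading off the first-order term directly from the diagonal of $\mathrm{Cov}(\mb{X})$ at zero SNR, and you use the power normalization $\mathbb{E}[|x(n)_b|^2]=1$ (rather than Prop.~\ref{Prop: symmetry}) to force that diagonal to be all ones. This is cleaner and more general: it sidesteps the somewhat lengthy Taylor bookkeeping, and it makes transparent why the precoder $P$ drops out (orthogonality preserves the covariance). The price is that you must appeal to the matrix I--MMSE identity for arbitrary discrete inputs (e.g.\ Palomar--Verd\'u), whereas the paper's argument is self-contained. The obstacle you flag --- uniform control of the remainder near the boundary --- is real and is exactly the point at which the paper's proof is also informal; neither argument makes the ``low instantaneous SNR'' regime fully quantitative.
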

\begin{proof}
See Appendix \ref{app: proof prop. precoding low SNR}.
\end{proof}

\begin{proposition}
\label{bound: bpsk with precoding high SNR}
On a BF channel at high instantaneous SNR, with a discrete input alphabet and with linear precoding, the outage region $V_{\textrm{o}}\textrm{(discrete)}$ is outer (inner) bounded by the $B$-hypersphere $\alpha_1^2+\alpha_2^2 + \ldots + \alpha_B^2 = \alpha_{\textrm{o}}^2$ ($\alpha_1^2+\alpha_2^2 + \ldots + \alpha_B^2 = B \alpha_{e}^2$). 
\end{proposition}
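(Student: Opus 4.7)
The plan is to mirror the geometry of Propositions \ref{bound: gauss}, \ref{bound: bpsk}, and \ref{bound: bpsk with precoding low SNR}: exhibit two concentric $B$-hyperspheres, tangent to the outage boundary at $\bs{\alpha}_{b,\textrm{o}}$ (outer) and at $\bs{\alpha}_{\textrm{e}}$ (inner), and verify that on each of them $I(\bs{\alpha},\gamma,P)$ sits on the correct side of $R$. Compared with Proposition \ref{bound: bpsk}, the new difficulty is that linear precoding couples the components of $\mb{t} = \bs{\alpha}\cdot\mb{x}$, so Lemma \ref{Lemma: Cartesian} no longer holds with equality and one cannot reduce the problem to one-dimensional MMSE concavity.

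At high instantaneous SNR I would bypass the intractable expression (\ref{eq.: mutual info 2}) through the standard asymptotic characterization of discrete-input Gaussian mutual information. Integrating the I-MMSE identity (\ref{eq: mmse I}) and using the exponential decay of the MMSE for a discrete input yields, to leading order,
\begin{equation*}
m - I(\mb{X};\mb{Y}|\bs{\alpha},\gamma) \;\sim\; \exp\!\left(-\tfrac{\gamma}{4}\, d_{\min}^2(\bs{\alpha})\right),
\end{equation*}
where $d_{\min}^2(\bs{\alpha}) = \min_{\Delta\mb{z}\neq\mb{0}} \sum_{b=1}^B \alpha_b^2 |(P\Delta\mb{z})_b|^2$ is the squared minimum distance of the faded constellation $\Omega_t$. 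In this regime the outage condition $I(\bs{\alpha},\gamma,P)=R$ reduces, up to subexponential corrections, to a level set $d_{\min}^2(\bs{\alpha}) = \kappa(\gamma,R)$ for some positive threshold $\kappa$.

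The geometric step then exploits two structural properties of $d_{\min}^2$ viewed as a function of $(\alpha_1^2,\ldots,\alpha_B^2)$: it is the pointwise minimum of finitely many linear functions, hence concave, and under the symmetry hypothesis of Proposition \ref{Prop: symmetry} it is invariant under the relevant permutations of $(\alpha_1^2,\ldots,\alpha_B^2)$. On the simplex $\{\alpha_b^2 \geq 0,\ \sum_b \alpha_b^2 = c\}$, a concave function is minimized at a vertex, and a symmetric concave function is maximized (by the usual averaging-over-permutations argument) at the centre. Choosing $c=\alpha_{\textrm{o}}^2$ puts the vertices at $\bs{\alpha}_{b,\textrm{o}}$, where $d_{\min}^2$ equals $\kappa$ by definition of the outage boundary; choosing $c=B\alpha_{\textrm{e}}^2$ puts the centre at $\bs{\alpha}_{\textrm{e}}$, where again $d_{\min}^2$ equals $\kappa$. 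This delivers $I \geq R$ on the outer hypersphere and $I \leq R$ on the inner one, exactly as required.

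The main technical obstacle will be converting the asymptotic statement on $d_{\min}^2$ back into a genuine inequality on $I$, uniformly in $\bs{\alpha}$ along the hyperspheres: different $\bs{\alpha}$'s may activate different minimizing difference vectors $\Delta\mb{z}$, and the subexponential prefactor in the I-MMSE expansion depends on the multiplicity of the minimum distance, which can itself vary along the hypersphere. The plan is to track this prefactor explicitly and verify that at sufficiently high instantaneous SNR it cannot reverse the inequality direction; if that analysis becomes cumbersome, a fallback is to use monotonicity of the MMSE to upper-bound $m - I$ by a dominating pairwise-error (Chernoff) term whose exponent is $d_{\min}^2(\bs{\alpha})/8$, and which therefore inherits verbatim the concavity of $d_{\min}^2$ in $(\alpha_1^2,\ldots,\alpha_B^2)$.
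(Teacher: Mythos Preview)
Your proposal is correct and follows essentially the same route as the paper: reduce the high-SNR mutual information to the minimum Euclidean distance $d_{\min}^2(\bs{\alpha})$ of the faded constellation, then exploit that $d_{\min}^2$ is a concave (minimum of linear) function of $(\alpha_1^2,\ldots,\alpha_B^2)$ together with the Proposition~\ref{Prop: symmetry} symmetry to locate its extrema on the simplex at the vertices and at the centre. For the technical obstacle you flag---turning the asymptotic $d_{\min}^2$ ordering into a genuine inequality on $I$---the paper invokes the explicit two-sided bounds of P\'erez-Cruz, Rodrigues and Verd\'u (which sandwich $m-I(\mb{T};\mb{Y}|\bs{\alpha},\gamma)$ between terms of the form $e^{-d_{\min}^2(\bs{\alpha})\gamma/4}$ times polynomial prefactors) and argues by contradiction that at sufficiently large $\gamma$ the exponential dominates any prefactor mismatch; this is exactly the ``track the prefactor explicitly'' route you outline, and you should note that your Chernoff fallback only furnishes one inequality direction and so does not by itself cover the inner bound.
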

\begin{proof}
See Appendix \ref{app: proof prop. precoding high SNR}.
\end{proof}

The outer and inner bound touch the outage boundary at the axes of the fading space and on the ergodic line, respectively. The outer (inner) region corresponds to an upper (lower) bound on the outage probability. Minimizing this upper (lower) bound is simply achieved by minimizing $\alpha_{\textrm{o}}$ ($\alpha_{e}$). 

%
%

To get more insight, we give an illustration of Props. \ref{bound: bpsk with precoding low SNR} and \ref{bound: bpsk with precoding high SNR}. We consider the $4$-$\mathcal{R}^2$ constellation $\Omega_x$ from Fig. \ref{fig: balanced}, with $\theta = \pi/6$. Further, we take $\alpha_1 = \alpha_{\textrm{o}} \cos(\lambda)$ and $\alpha_2 = \alpha_{\textrm{o}} \sin(\lambda)$, i.e., $|\bs{\alpha}| = \alpha_{\textrm{o}}$, so that $\bs{\alpha}$ is on the outer boundary of the outage boundary. Fig. \ref{fig:  unbalanced QPSK} shows the corresponding mutual information $I(\bs{\alpha},\gamma,\theta)$ as a function of $\gamma$ for various values of $\lambda$. We observe that the mutual information increases when $\lambda$ increases from $0$ to $\pi/4$. The projections of the constellation $\Omega_t$ on the horizontal and vertical coordinate axes have variances $\alpha_1^2 = \alpha_{\textrm{o}}^2 \cos^2(\lambda)$ and $\alpha_2^2 = \alpha_{\textrm{o}}^2 \sin^2(\lambda)$ respectively, whereas the total variance of $\Omega_t$ equals $\alpha_{\textrm{o}}^2$, irrespective of $\lambda$. When $\lambda = \pi/4$, $\Omega_t$ is equivalent to a common QPSK constellation: the variances of both projections equal $\frac{\alpha_{\textrm{o}}^2}{2}$, and the corresponding mutual information is maximum. When $\lambda$ decreases from $\pi/4$ to $0$, the difference between the two variances increases, $\Omega_t$ becomes an increasingly more distorted QPSK constellation, and the mutual information decreases. At $\lambda = 0$ we get $\alpha_1 = \alpha_{\textrm{o}}$ and $\alpha_2 = 0$; $\Omega_t$ (which is now a scaling of $\mathcal{S}_{\textrm{p}}$) reduces to a 4-PAM constellation and the corresponding mutual information is minimum. This illustrates the general principle that performance is optimized when the transmit power is equally split over the available identical channels and the performance is worst when the transmit power is completely used for only one channel, which is exactly what is claimed in Props. \ref{bound: bpsk with precoding low SNR} and \ref{bound: bpsk with precoding high SNR} for high and low instantaneous SNR. More specifically, for $\bs{\alpha}$ on the hypersurface of a hypersphere, it is proved in Props. \ref{bound: bpsk with precoding low SNR} and \ref{bound: bpsk with precoding high SNR} that the mutual information is the smallest at $\bs{\alpha}_{b,\textrm{o}}$, and the largest at $\bs{\alpha}_{e}$.
\begin{figure}
	\centering
	\includegraphics[angle=-90, width=0.8 \textwidth]{./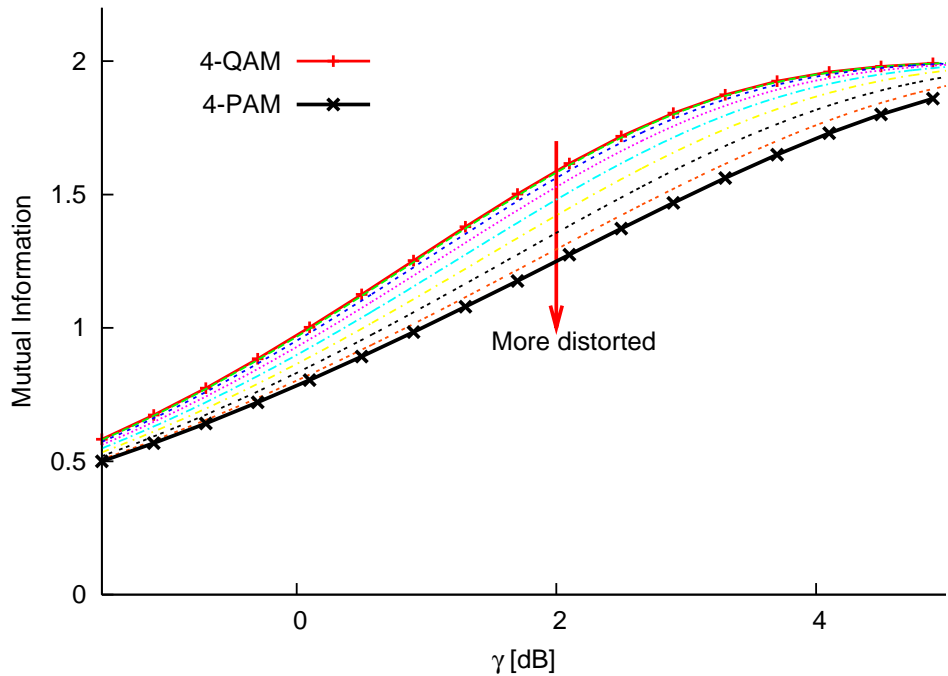}
	\caption{The mutual information is monotonically non-increasing as the constellation becomes more distorted. The upper and lower curve correspond to $\lambda=\pi/4$ and $\lambda=0$ respectively.}
	\label{fig: unbalanced QPSK}
\end{figure}


Prop. \ref{prop: full diversity} states the condition under which the precoded constellations achieve full diversity for a rate $R$. This is proved in \cite{fab2007mcm}, but based on Prop. \ref{bound: bpsk with precoding high SNR}, a new simple proof can be given.

\begin{proposition}
\label{prop: full diversity}
For any coding rate $0 < R_c < 1$, the outage probability of a Rayleigh distributed block fading channel with $B$ blocks, with a discrete input alphabet and linear precoding exhibits full diversity, i.e., $P_{\textrm{out}}(\gamma, P, R)  \propto 1/\gamma^B$ at large $\gamma$, if $2^{B R}$ does not exceed the number of points contained in the projection of the precoded input constellation $\Omega_x$ on a coordinate axis. 
\end{proposition}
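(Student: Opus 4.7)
The plan is to combine the hypersphere outer bound from Proposition~\ref{bound: bpsk with precoding high SNR} (valid at high instantaneous SNR, which is exactly the regime relevant to the diversity analysis) with the standard small-argument behavior of the distribution of a sum of squared Rayleigh fading gains. Full diversity means the upper exponent $B$ is attained; since the diversity order is trivially at most $B$ (there are only $B$ independent fading coefficients), it suffices to produce an \emph{upper bound} on $P_{\textrm{out}}$ that decays like $1/\gamma^B$.

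First, I would invoke Proposition~\ref{bound: bpsk with precoding high SNR}: for large $\gamma$, the outage region is contained in the hyperball $\sum_{b=1}^B \alpha_b^2 \le \alpha_{\textrm{o}}^2$, so
\begin{equation*}
P_{\textrm{out}}(\gamma,P,R)\ \le\ \Prob\Bigl(\textstyle\sum_{b=1}^{B}\alpha_b^2 \le \alpha_{\textrm{o}}^2\Bigr).
\end{equation*}
Next I would pin down the scaling of $\alpha_{\textrm{o}}^2$ with $\gamma$. By Definition~\ref{def: alpha_bo} and Eq.~(\ref{eq: mutual information tn}), evaluating the mutual information at $\bs{\alpha}_{b,\textrm{o}}$ (all components zero except the $b$-th) collapses $\Omega_t$ to a scaling of the projected constellation $\mathcal{S}_{\textrm{p}}$, so $\alpha_{\textrm{o}}$ is the unique positive solution of
\begin{equation*}
I_{\mathcal{S}_{\textrm{p}}}\!\left(\alpha_{\textrm{o}}^2\,\gamma,P\right)\ =\ BR.
\end{equation*}
Because $I_{\mathcal{S}_{\textrm{p}}}(\cdot,P)$ is continuous, strictly increasing, and saturates at $\log_2|\mathcal{S}_{\textrm{p}}|$, the assumption $2^{BR}\le|\mathcal{S}_{\textrm{p}}|$ (strict if $R_c<1$) guarantees that the equation has a finite solution $s^{\ast}$ at the SNR argument: $\alpha_{\textrm{o}}^2\,\gamma = s^{\ast}$, whence $\alpha_{\textrm{o}}^2 = s^{\ast}/\gamma$, independent of $\gamma$ apart from the explicit $1/\gamma$.

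Finally, I would use the joint distribution of the fading gains. Under i.i.d.\ Rayleigh fading with $\E[\alpha_b^2]=1$, the squared magnitudes $\alpha_b^2$ are i.i.d.\ exponentials, and $\sum_{b=1}^{B}\alpha_b^2$ is Gamma-$(B,1)$; its CDF admits the small-argument expansion $F(x) = x^B/B! + o(x^B)$ as $x\to 0^{+}$. Applying this with $x=s^{\ast}/\gamma$ yields
\begin{equation*}
P_{\textrm{out}}(\gamma,P,R)\ \le\ \frac{(s^{\ast})^{B}}{B!\,\gamma^{B}} + o\!\left(\gamma^{-B}\right),
\end{equation*}
so the diversity order is at least $B$. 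Combined with the obvious upper bound of $B$ on the diversity order, this gives full diversity and completes the proof.

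The main obstacle I anticipate is justifying the high-SNR applicability of Proposition~\ref{bound: bpsk with precoding high SNR} uniformly along the shrinking hypersphere as $\gamma\to\infty$: one must argue that every point of the relevant outage boundary sits in the high-instantaneous-SNR regime of that proposition, which is precisely what the condition $2^{BR}\le|\mathcal{S}_{\textrm{p}}|$ delivers by forcing $\alpha_{\textrm{o}}^2\,\gamma$ to stay at a fixed finite level $s^{\ast}$ as $\gamma$ grows.
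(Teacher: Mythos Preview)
Your proposal is correct and follows essentially the same route as the paper: invoke the hypersphere outer bound from Proposition~\ref{bound: bpsk with precoding high SNR}, deduce $\alpha_{\textrm{o}}^2 = I_{\mathcal{S}_{\textrm{p}}}^{-1}(BR,P)/\gamma$ (whose finiteness is ensured exactly by the hypothesis $2^{BR}\le|\mathcal{S}_{\textrm{p}}|$), bound the chi-square/Gamma tail of $\sum_b\alpha_b^2$ at a vanishing argument, and finish with the trivial upper bound $B$ on the diversity order. The only cosmetic difference is that the paper bounds the CDF globally by $x^B$ through the series identity $1-e^{-x}\sum_{k=0}^{B-1}x^k/k!\le x^B$, whereas you use the equivalent small-$x$ expansion $x^B/B!+o(x^B)$; both yield the required $\gamma^{-B}$ decay.
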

\begin{proof}
See Appendix \ref{app: prop. full diversity}.
\end{proof}

It should be noted that the proof of Prop. \ref{prop: full diversity} assumes the existence of $I_{\mathcal{S}_{\textrm{p}}}^{-1}(B R, P)$; hence, the number of points in $\mathcal{S}_{\textrm{p}}$ must not be less than $2^{B R}$. We consider the $4$-$\mathcal{R}^2$ constellation from Fig. \ref{fig: balanced} to illustrate the constraint on the existence of $I_{\mathcal{S}_{\textrm{p}}}^{-1}(B R, P)$. For $4$-$\mathcal{R}^2$ from Fig. \ref{fig: balanced}, $\mathcal{S}_{\textrm{p}}$ contains only $2$ points when the rotation angle $\theta$ is a multiple of $\pi/2$, $3$ points when the rotation angle $\theta$ is a multiple of $\pi/4$ but not of $\pi/2$, and $4$ points otherwise, yielding $R=0.5$, $R=3/4$ and $R=1$, respectively, as maximal rates corresponding to full diversity.

\section{Optimization of the outage probability of precoded constellations}
\label{sec: optimization real}

In the previous section, we established that, for high and low instantaneous SNR, the outage region of block fading channels with  precoded constellations is outer bounded  by a $B$-hypersphere with center in the origin and touching the outage boundary on the axes of the fading space. This hypersphere corresponds to an upper bound on the outage probability of the channel (see the paragraph after Def. \ref{def: upper bound}). Instead of minimizing the actual outage probability, it is easier to minimize the upper bound on the outage probability. This optimization will allow the actual outage probability to closely approach a lower bound on the outage probability, i.e., the outage probability corresponding to an i.i.d. Gaussian input alphabet, as we will see in the numerical results.

The $B$-hypersphere is completely determined by one variable, its radius $\alpha_{\textrm{o}}$. We denote the $B$-hypersphere outer bounding the outage region by $V_{\textrm{up}}(\alpha_{\textrm{o}})$ and the corresponding upper bound on the outage probability by $P_{\textrm{up}}(\alpha_{\textrm{o}})$:
\begin{equation}
	P_{\textrm{up}}(\alpha_{\textrm{o}})=\int_{\bs{\alpha} \in V_{\textrm{up}}(\alpha_{\textrm{o}})} p(\bs{\alpha)} \mathrm{d}\bs{\alpha}.
\label{eq.: upperbound outage prob in space}
\end{equation}
From Eq. (\ref{eq.: upperbound outage prob in space}), it is clear that the region $V_{\textrm{up}}(\alpha_{\textrm{o}})$ has to be made as small as possible to minimize $P_{\textrm{up}}(\alpha_{\textrm{o}})$. Therefore, the optimization target is to minimize the radius $\alpha_{\textrm{o}}$. 

Because $I_{\mathcal{S}_{\textrm{p}}}(\alpha_{\textrm{o}}^2 \gamma, P) = B R$, the minimization of $\alpha_{\textrm{o}}$ (and, therefore, the minimization of the upper bound on the outage probability) is achieved by selecting the constellation $\mathcal{S}_{\textrm{p}}$ requiring the least energy to achieve a rate $BR$ on a Gaussian channel. This involves a proper selection of both the constellation $\Omega_z$ and the precoding matrix $P$. Note that this optimization is much simpler than the direct minimization of the outage probability as given by Eqs. (\ref{eq: outage probability}) and (\ref{eq.: mutual info}), because the outage probability is hard to evaluate, especially when the number of fading gains and constellation points is large. Furthermore, no insight is gained by the latter approach, so that it would not be clear which constellation $\Omega_z$ should be taken.

\begin{figure}
	\centering
	\subfigure[The transmitted symbols are the components $x_b$ of the constellation $\Omega_x$.]{\includegraphics[width=0.40 \textwidth]{./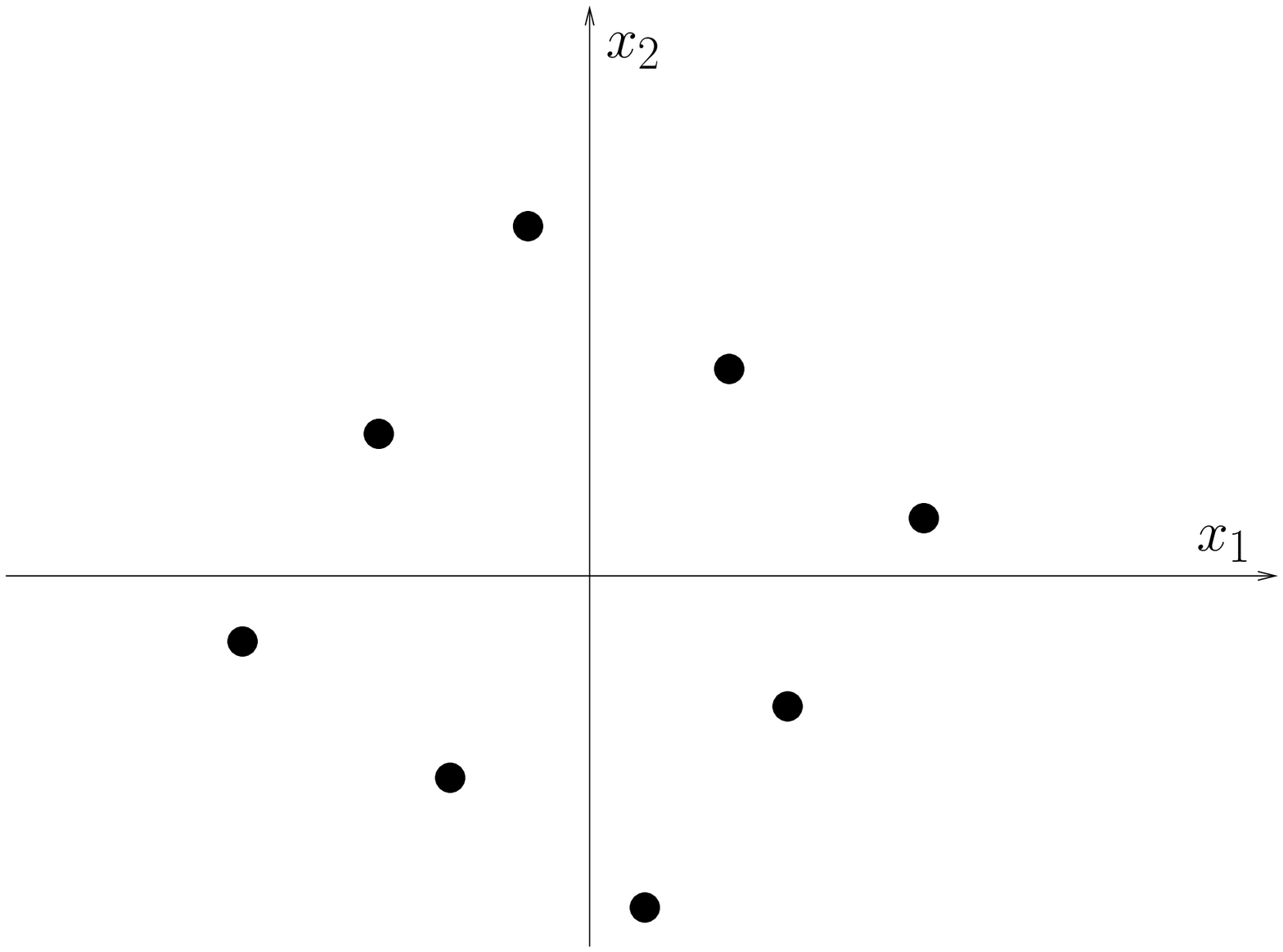}\label{fig: balanced 8-QAM}}
	\quad 
	 \subfigure[The received symbols $t_b$ are the faded components $\alpha_b x_b$. Collecting the different components $t_b$, the faded constellation $\Omega_t$ is constructed, which is expressed by $\Omega_t=\bs{\alpha}\cdot\Omega_x$.]{\includegraphics[width=0.40 \textwidth]{./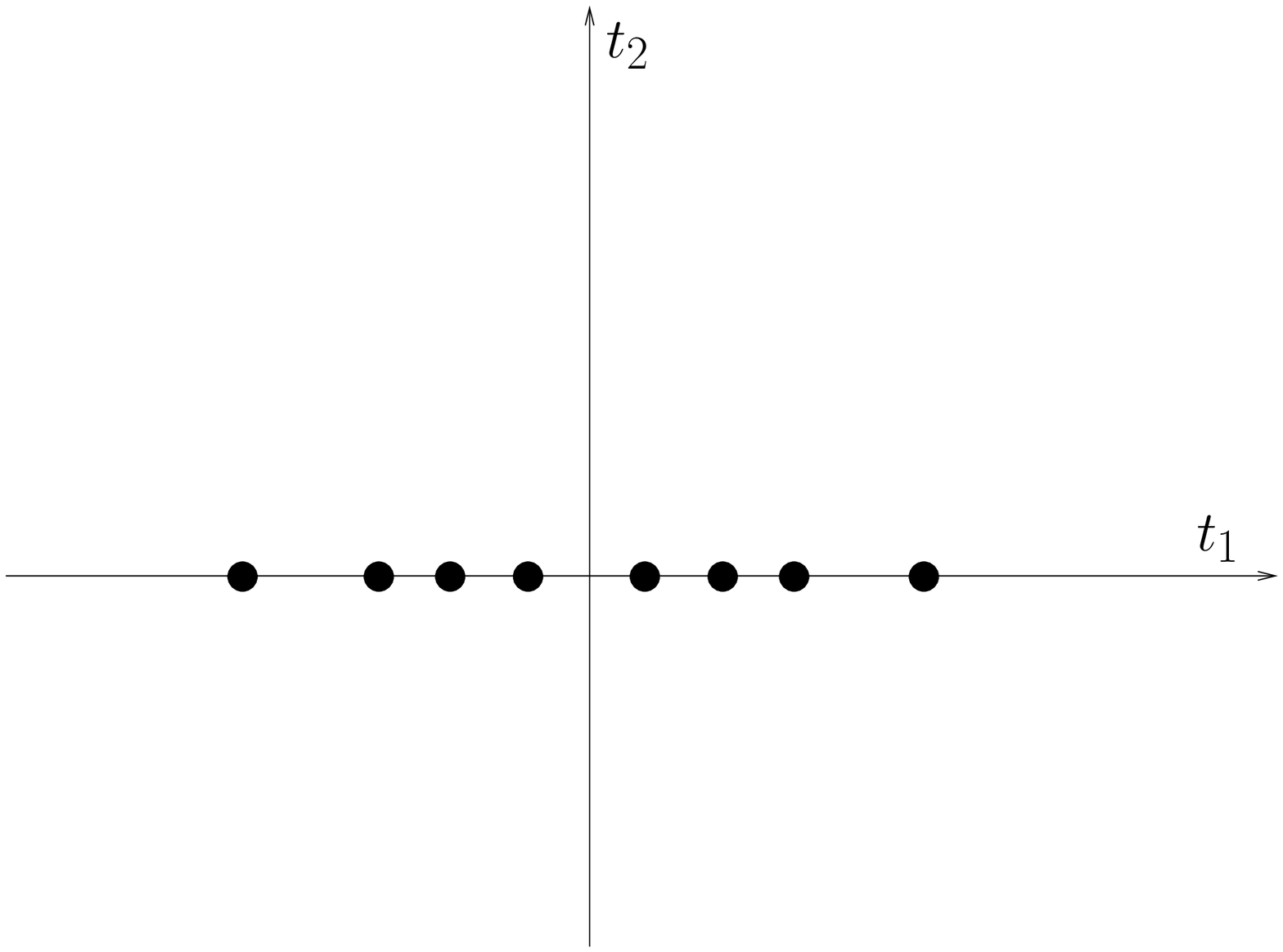}\label{fig: unbalanced 8-QAM}}
	\caption{The transmitted ($\Omega_x$) and faded constellation ($\Omega_t$) are shown for the case of the transmission of real-valued symbols, $B=2$ and $\Omega_z=8$-$\mathcal{R}^2$. For $\Omega_t = \bs{\alpha}\cdot\Omega_x$, the fading point $(\alpha_{1,\textrm{o}},0)$, the intersection between the outage boundary and the axis $\alpha_{2}=0$, is used.}
	\label{fig: rotated 8-QAM}
\end{figure}

Also, the outage region of block fading channels with  precoded constellations is inner bounded  by a $B$-hypersphere with center in the origin and touching the outage boundary  in the point $\bs{\alpha}_{\textrm{e}}$. Hence, $\alpha_{\textrm{e}}$ determines the radius of this $B$-hypersphere, $\sqrt{B} \alpha_{\textrm{e}}$. In this point, the faded constellation $\Omega_t$ is balanced, as in Fig. \ref{fig: balanced 8-QAM}. We will denote this balanced constellation $\Omega_t = \bs{\alpha}_{\textrm{e}} \cdot\Omega_x$ by $\mathcal{S}_{\textrm{e}}$. Hence, the hypersphere corresponds to a lower bound on the outage probability which is minimized by optimizing the mutual information of $\mathcal{S}_{\textrm{e}}$. 

In this section, we will show that the radius of the outer region, $\alpha_{\textrm{o}}$, and the radius of the inner region, $\sqrt{B} \alpha_{\textrm{e}}$, can be minimized by combining a simple optimization of the precoding matrix $P$ with a constellation expansion.

\subsection{Optimization of the precoding matrix}

Precoding corresponds to performing a unitary transformation of the input vector $\mb{x}$. As a unitary transformation preserves distance, it follows from Eq. (\ref{eq.: mutual info 2}) that the value $\alpha_{\textrm{e}}$ is insensitive to orthogonal transformations. Hence, the selection of the precoding matrix $P$ affects only the radius of the outer region $\alpha_{\textrm{o}}$. Let us denote by $\mathcal{O}$ the set of parameters from $P$ over which we will minimize $\alpha_{\textrm{o}}$. For $B=2$ and $B=3$, the only degree of freedom is the rotation angle (see (\ref{eq: P for B=2}) and (\ref{eq: P for B=3})). For $B>3$, more degrees of freedom can be exploited to minimize $\alpha_{\textrm{o}}$. For the numerical results, we restrict ourselves to $B \leq 3$. 

The mutual information of $\mathcal{S}_{\textrm{p}}$ can be rewritten as $I_{\mathcal{S}_{\textrm{p}}}(\alpha_{\textrm{o}}^2 \gamma, \mathcal{O})$, which yields $\alpha_{\textrm{o}}^2 = \frac{I_{\mathcal{S}_{\textrm{p}}}^{-1}(B R, \mathcal{O})}{\gamma}$. Changing the value of $\mathcal{O}$ (e.g. the rotation angle $\theta$ for $B=2$) will change the distances between the points in $\mathcal{S}_{\textrm{p}}$ and so change its mutual information. For a fixed spectral efficiency $R$ and fixed average SNR $\gamma$, minimizing the radius yields the optimization criterion
\[ \mathcal{O}_{\textrm{opt}} = \argmin{\mathcal{O}}~ I_{\mathcal{S}_{\textrm{p}}}^{-1}(B R, \mathcal{O}).\] 
The optimization is performed by means of a simulation, due to the lack of closed form expressions of the mutual information. Because the constellation is one-dimensional, the computational effort is minimal. We apply this optimization for different scenarios in Sec. \ref{sec: numerical results}. 

\subsection{Constellation expansion}
\label{sec: constellation expansion}

As the number of information bits per channel use is $R = m R_c /B$, there are different combinations of $m$ and $R_c$ yielding the same $R$. Taking into account that $R_c \leq 1$, the minimum value of $m$ equals $\lceil B R \rceil$, with a corresponding coding rate $R_c = \frac{B R}{\lceil B R \rceil}$. 

The number of points in the constellation is $|\Omega_t|$. Increasing the constellation size of $\Omega_z$ will render a constellation $\Omega_t$ with more points, both for $\mathcal{S}_{\textrm{e}}$ and $\mathcal{S}_{\textrm{p}}$. This higher order constellation may need less energy to achieve the same rate, both for the balanced case (optimization of $\alpha_{e}$) as the distorted case (optimization of $\alpha_{\textrm{o}}$). However, the decoding complexity increases as well as the complexity of optimization, so that there is a trade-off between performance and complexity. The higher the constellation size, the smaller the horizontal SNR-gap between the outage probabilities corresponding to a precoded discrete input alphabet and i.i.d. Gaussian input alphabet. However, the improvement in performance becomes smaller and smaller, as illustrated in Sec. \ref{sec: numerical results}. 


\section{Numerical results}
\label{sec: numerical results}

\subsection{Numerical results for $B=2$}

When $B=2$, $\mathcal{O}=\theta$, and the optimization criterion for the upper bound on the outage probability is to find $\theta$ so that $I_{\mathcal{S}_{\textrm{p}}}^{-1}(B R, \theta)$ is minimized. Next, a constellation expansion is performed to further minimize the upper bound as well as the lower bound on the outage probability. 

\subsubsection{Real constellations}

Assume that a transmission rate $R=0.9$ bpcu is aimed. First, we consider the optimization of the rotation angle $\theta$, see Fig. \ref{fig: optimization constellation expansion}. On the left $y$-axis, we show the instantaneous SNR per symbol, $\gamma_{\textrm{s}} = \alpha_{\textrm{o}}^2 \gamma$, so that $I_{\mathcal{S}_{\textrm{p}}}(\gamma_{\textrm{s}}, \theta) = B R$. The minimum SNR per symbol $\gamma_{\textrm{s}}$ that is needed to transmit $R=0.9$ bpcu for $\gamma=8$ dB is achieved by an i.i.d. Gaussian input alphabet:
\[\gamma_{\textrm{s}} = \frac{2^{4R}-1}{2}.\]
This fundamental minimum can be approached when using a precoded discrete input $\Omega_z=4$-$\mathcal{R}^2$ ($R_c = 0.9$) with rotation angle $\theta=27$ degrees. Now, we apply a constellation expansion to further reduce $\alpha_{\textrm{o}}$ (see Fig. \ref{fig: optimization constellation expansion}) and $\alpha_{\textrm{e}}$ (see Fig. \ref{fig:  outage   boundary  8-QAM}). For example, $\gamma_{\textrm{s}}$ for the rotated constellation $\Omega_z=8$-$\mathcal{R}^2$ ($R_c = 0.6$) approaches the theoretical minimum very closely for rotation angles within $[0 \ldots 9]$ degrees. An expansion to $\Omega_z=16$-$\mathcal{R}^2$ ($R_c = 0.45$ and $\theta_{\textrm{opt}} \in [35, 45]$ degrees) only slightly improves the performance. The optimization of $\gamma_{\textrm{s}}$ decreases the volume of the region $V_{\textrm{o}}$, which is illustrated in Fig. \ref{fig: example outage boundary} for $\Omega_z=4$-$\mathcal{R}^2$. Fig. \ref{fig: outage boundary 8-QAM} illustrates that constellation expansion is sufficient to reduce the value $\alpha_{\textrm{e}}$, which is very close to the theoretical minimum. Therefore, the constellation is not further shaped to minimize the lower bound of the outage probability.

\begin{figure}[!t]
\begin{center}
\includegraphics[angle=-90, width=0.75\textwidth]{./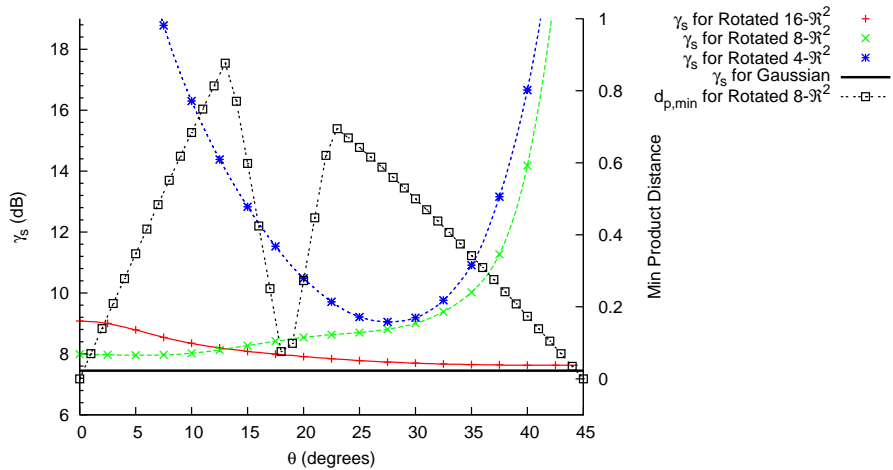}
\caption{The optimization of the radius of the outer region is shown for $\Omega_z=4$-$\mathcal{R}^2$, $R=0.9$ bpcu and $\gamma=8$ dB. The y-axis at the left denotes the instantaneous SNR per symbol, $\gamma_{\textrm{s}} = \alpha_{\textrm{o}}^2 \gamma$, and the right y-axis denotes the minimum product distance. The thick black line without markers represents the fundamental minimum SNR per symbol, $\gamma_{\textrm{s}}$, that is needed to transmit $R=0.9$ bpcu for $\gamma=8$ dB, i.e., when using an i.i.d. Gaussian input alphabet. The effect of constellation expansion on the radius $\alpha_{\textrm{o}}$ is shown by optimization of $\Omega_z=8$-$\mathcal{R}^2$ and $\Omega_z=16$-$\mathcal{R}^2$. The profile of the minimum product distance, $d_{p,\textrm{min}}$, is shown for $\Omega_z=8$-$\mathcal{R}^2$. \label{fig: optimization constellation expansion}}
\end{center}
\end{figure}

The information theoretic approach used in this paper does not lead to the same optimized rotations as in the case of algebraic constructions of uncoded constellations. In \cite{boutros1998ssd} and \cite{bayer2004nac}, multidimensional rotations have been optimized for uncoded infinite constellations transmitted on ergodic fading channels.  As a simple illustration, we show in Fig. \ref{fig: optimization constellation expansion} the minimum product distance $d_{p,\textrm{min}}$ \cite{tse2005fwc} of the uncoded $8$-$\mathcal{R}^2$ versus the rotation angle. The optimum and the profile of $d_{p,\textrm{min}}$ and those of $\gamma_{\textrm{s}}$ do not match. The minimum product distance approach is not suitable for coded schemes. For example, the minimum product distance is zero for $\theta=0$ degrees, while this rotation angle is optimal in terms of outage probability. 

\begin{figure}
	\centering
	\subfigure[]{\includegraphics[angle=-90, width=0.38 \textwidth]{./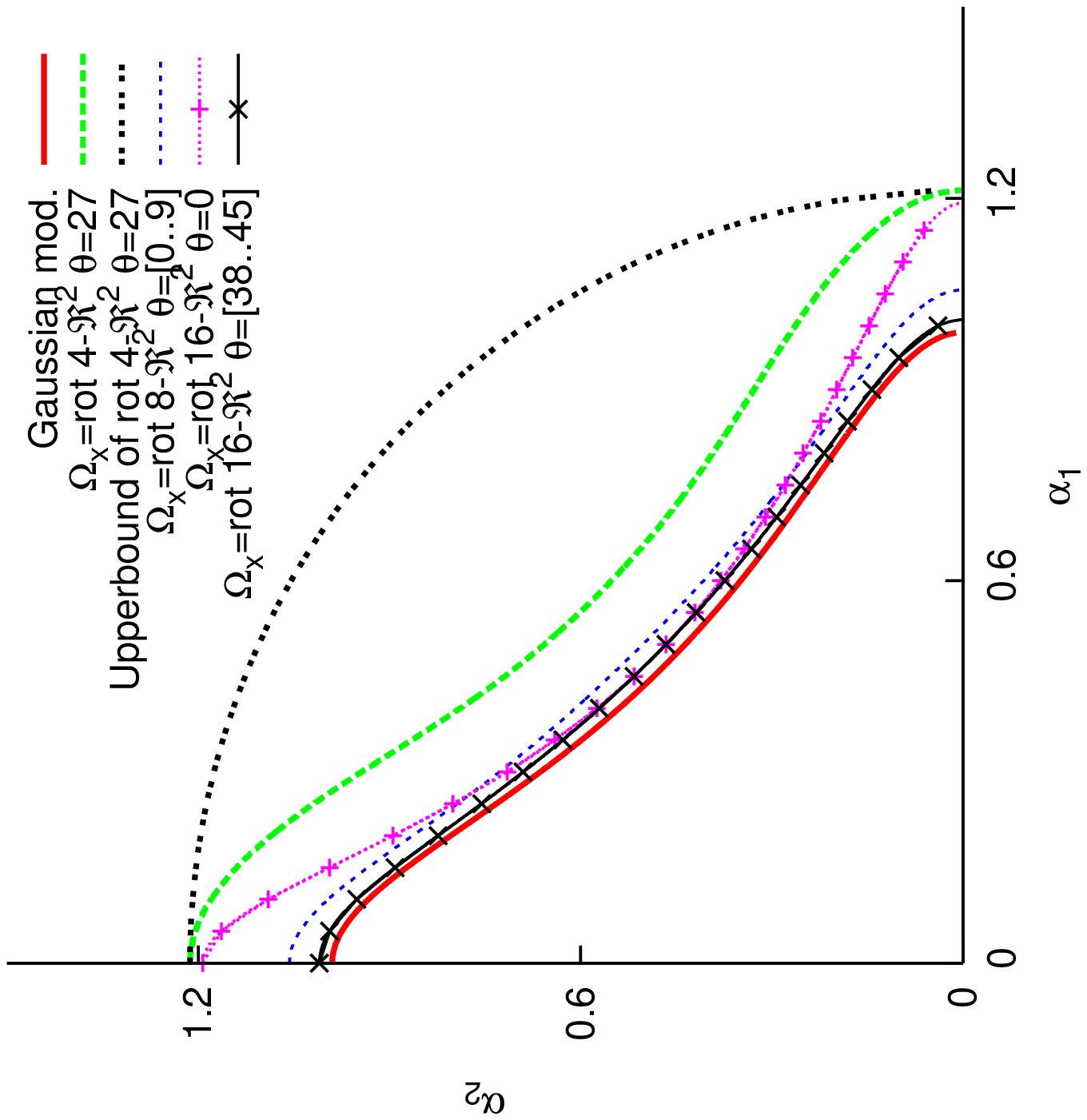}\label{fig: outage boundary 8-QAM}}
	\quad 
	 \subfigure[]{\includegraphics[angle=-90, width=0.5 \textwidth]{./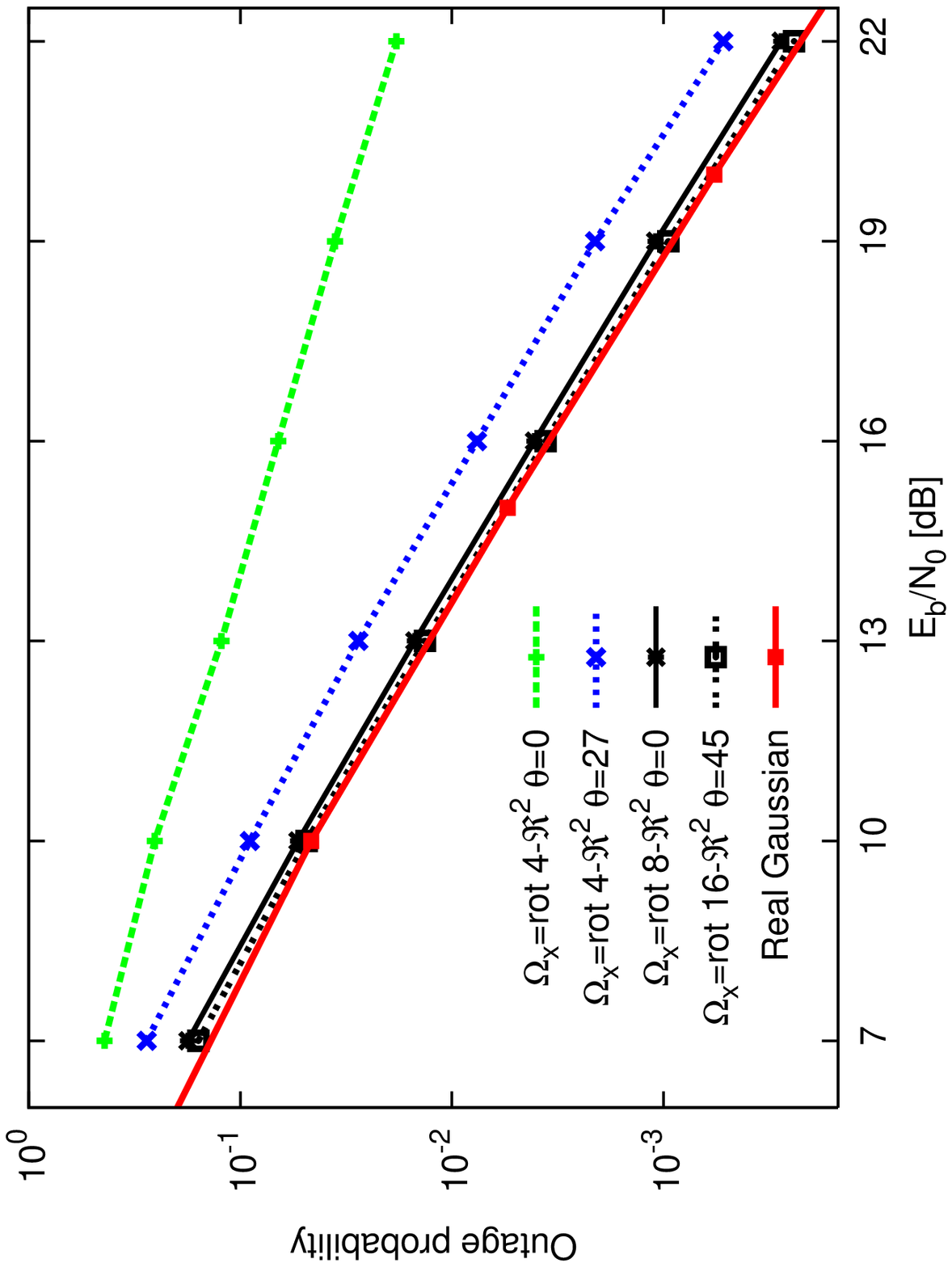}\label{fig: outage probability}}
	\caption{The outage boundaries (left) and outage probabilities (right) of $\Omega_z=4$-$\mathcal{R}^2$, $\Omega_z=8$-$\mathcal{R}^2$ and $\Omega_z=16$-$\mathcal{R}^2$ with and without optimized rotation angle are shown. The spectral efficiency is $R=0.9$ bpcu and $\gamma=8$ dB (left).}
	\label{fig: outage boundary and probability complex}
\end{figure}

The outage probabilities of the considered multidimensional constellations are shown in Fig. \ref{fig: outage probability}. This confirms that constellation expansion together with the optimization of the precoding parameter is sufficient to approach the outage probability with an i.i.d. Gaussian input alphabet very closely. It also shows that the constellation $\Omega_x = 8$-$\mathcal{R}^2$ with $\theta \in $ $[0 \ldots 9]$ degrees represents the best trade-off between performance and complexity.

\subsubsection{Extension to complex constellations}
\label{sec: optimization complex}

All the proofs in this paper are valid for complex constellations. This means that also for complex constellations, the outage region is outer bounded by a $B$-hypersphere, determined by one variable, its radius. We restrict our attention to real-valued precoding matrices. Consider Eq. (\ref{eq: precoding}), where $\mb{z}$ is now a complex vector and $P$ is real-valued. For complex symbols, this can be rewritten as 
\begin{equation}
\label{eq: complex rotation}
	\mb{x} = P \mathcal{R}\{ \mb{z} \} + j P \mathbb{I}\{ \mb{z} \},
\end{equation}
where $j^2 = -1$, $\mathcal{R}\{ .\}$ and $\mathbb{I}\{.\}$ take the real and complex part respectively. This means that the real and imaginary part of the complex vector are each precoded by the same matrix $P$. 

Assume that a transmission rate $R=1.8$ bpcu is aimed. Initially, we take the constellation $\Omega_z = 16$-$\mathcal{C}^2$ ($R_c = 0.9$), which can be build as the Cartesian product of two 4-QAM constellations ($16$-$\mathcal{C}^2$=4-QAM$\times$4-QAM). As for real-valued constellations, the rotation angle $\theta$ can be optimized, see Fig. \ref{fig: optim theta complex}. The gap to the outage probability corresponding to an i.i.d. Gaussian input alphabet can be closed by a constellation expansion and a new optimization of the rotation angle. The same strategy as for real-valued constellations could be applied by only adding one bit in the multidimensional constellation, which would extend $\Omega_z = 16$-$\mathcal{C}^2$ to $\Omega_z = 32$-$\mathcal{C}^2$. However, $\Omega_z = 32$-$\mathcal{C}^2$ cannot be written as the Cartesian product of two constellations and is therefore less convenient to generate ($32$ points would have to be placed properly in a $4$-dimensional space). For simplicity, the constellation expansion is done by adding one bit per component which extends $\Omega_z = 16$-$\mathcal{C}^2$ (=4-QAM$\times$4-QAM, $R_c = 0.9$) to $\Omega_z = 64$-$\mathcal{C}^2$ (8-QAM$\times$8-QAM \footnote{The 8-QAM constellations have the same form as in Fig. \ref{fig: balanced 8-QAM}.}, $R_c = 0.6$). The optimization of $\theta$ and the optimized outage probabilities are shown in Fig. \ref{fig: outage rotated complex}.

\begin{figure}
	\centering
	\subfigure[The optimization of the rotation angle for complex symbols is shown.]{\includegraphics[angle=-90, width=0.40 \textwidth]{./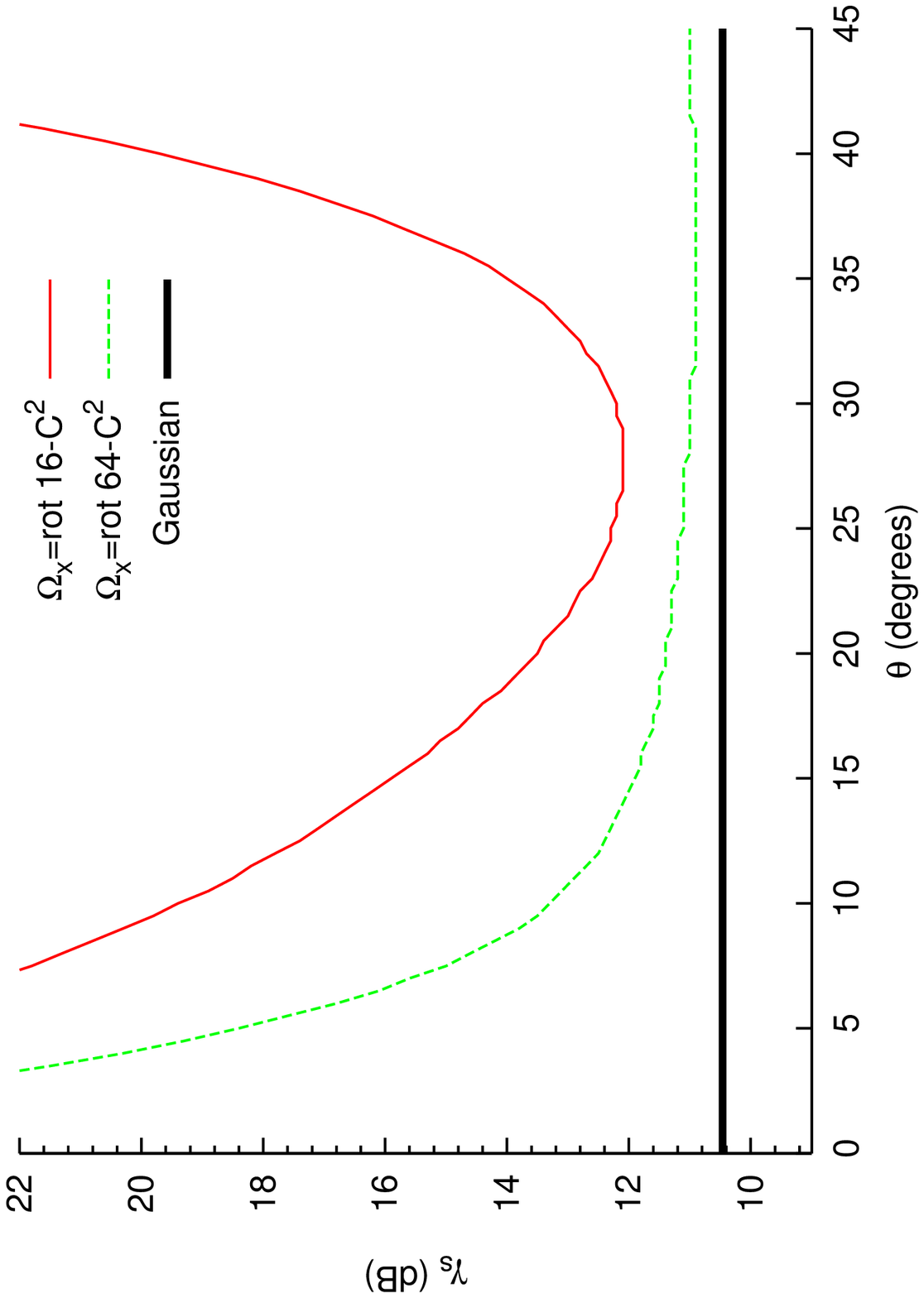}\label{fig: optim theta complex}}
	\quad 
	 \subfigure[The outage probabilities of the BF channel with complex inputs are shown.]{\includegraphics[angle=-90, width=0.40 \textwidth]{./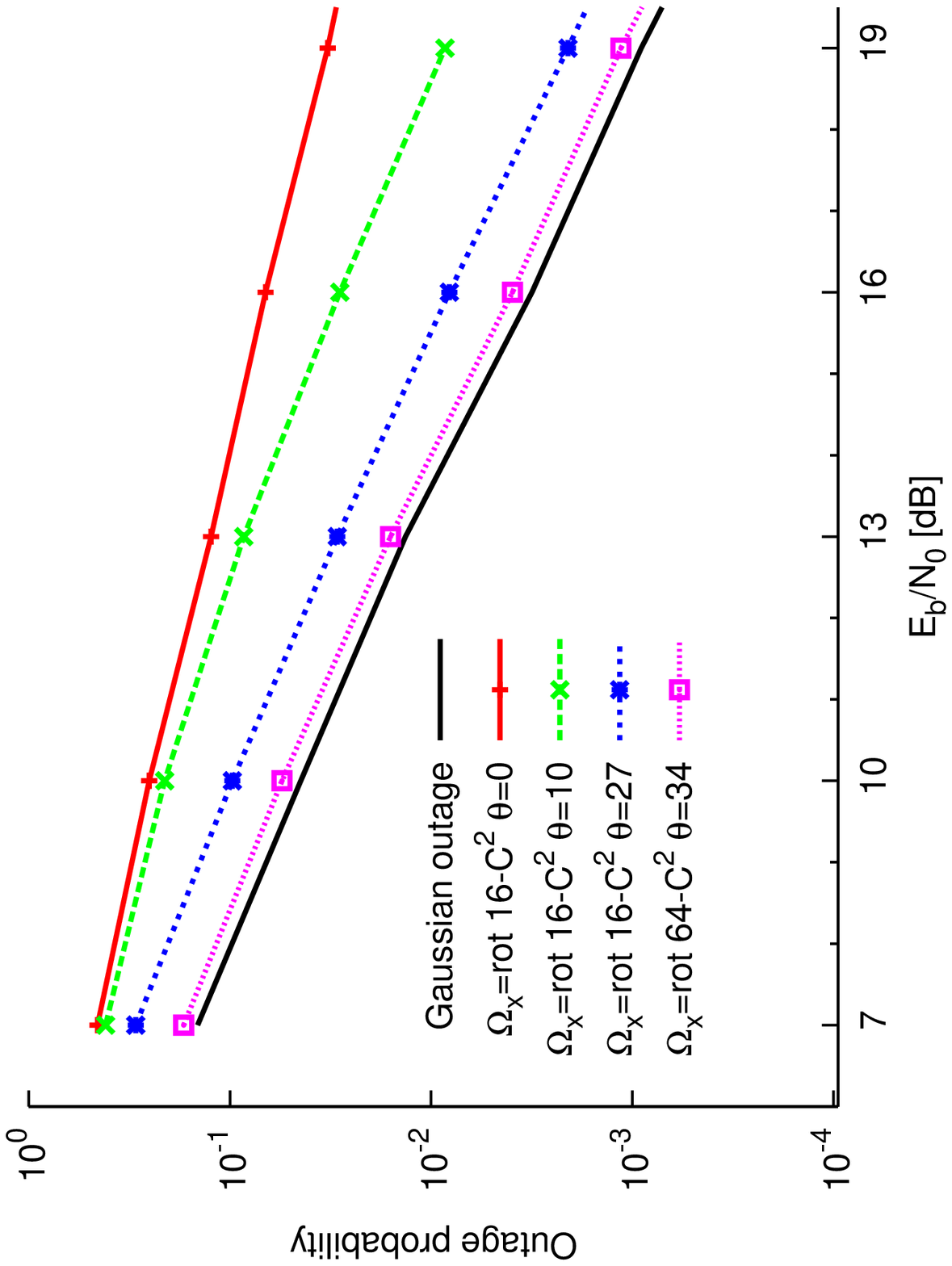}\label{fig: outage complex}}
	\caption{The optimization of $\theta$ and the optimized outage probabilities are shown when complex symbols are transmitted. The transmitted rate is $R=1.8$ bpcu.}
	\label{fig: outage rotated complex}
\end{figure}

Note that, for $\Omega_z = 16$-$\mathcal{C}^2$, the profile of the rotation angle $\theta$, and so the optimum rotation angle, is the same as for $\Omega_z = 4$-$\mathcal{R}^2$ and $R=0.9$ bpcu. This can be explained as follows. When $\mathcal{R}\{ \mb{z} \}$ and $\mathbb{I}\{ \mb{z} \}$ are drawn from the same real-valued constellation $\Psi_z$, then $\mb{z}$ belongs to a constellation $\Omega_z = \Psi_z + j \Psi_z$. Consequently, $\mb{x}$ belongs to a constellation $\Omega_x = \Psi_x + j \Psi_x$, where $\Psi_x$ is obtained by applying the precoding matrix $P$ to the constellation $\Psi_z$. From the chain rule of mutual information \cite{cover2006eit}, we obtain
\begin{equation}
\label{eq: complex QPSK}
	I_{\Omega_x} \left(\gamma \right) = 2 I_{\Psi_x}\left(\frac{\gamma}{2} \right).
\end{equation}
Hence, the precoding matrix $P$ that is optimum for a real-valued constellation $\Psi_z$ and rate $R$ is also optimum for a complex constellation $\Omega_z = \Psi_z + j \Psi_z$ and rate $2R$. The corresponding SNR for the complex constellation is $3$ dB higher than for the real-valued constellation. For $\Omega_z = 16$-$\mathcal{C}^2$ and $R = 1.8$ bpcu, the corresponding real-valued constellation is $\Psi_z = 4$-$\mathcal{R}^2$. Therefore, the profiles in Figs. \ref{fig: optim theta complex} and \ref{fig: optimization constellation expansion} are the same for both constellations, except for an upward translation of $3$ dB. \\

We also tested the performance for higher spectral efficiencies. Consider for example a system requiring a spectral efficiency of $R=3.6$ bpcu. Here, the same techniques can be used. First, $\Omega_z = 256$-$\mathcal{C}^2$ (16-QAM$\times$16-QAM, $R_c = 0.9$) is optimized, followed by $\Omega_z = 1024$-$\mathcal{C}^2$ (32-QAM$\times$32-QAM \footnote{The well known cross 32-QAM constellations are used.}, $R_c = 0.72$). The results are given in Fig. \ref{fig: outage probability high rate}. The same observations hold as for the previous numerical results.

\begin{figure}[t!]
	\centering
	\includegraphics[angle=-90, width=0.6\columnwidth]{./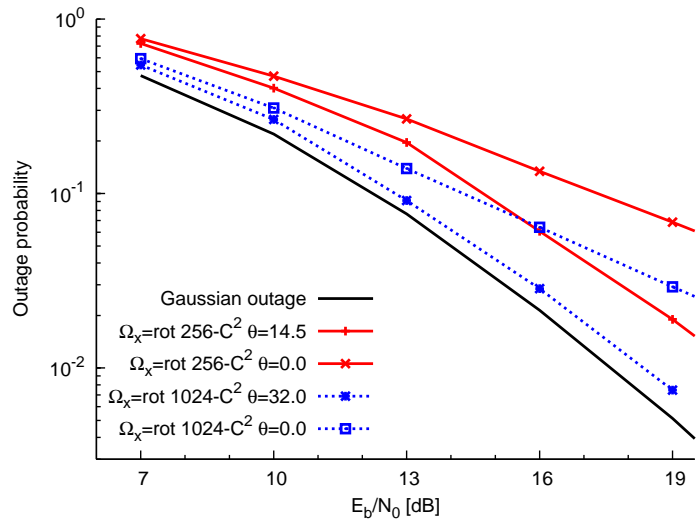}
	\caption{Outage probabilities of the BF channel with input $\Omega_z = 256$-$\mathcal{C}^2$ and $\Omega_z = 1024$-$\mathcal{C}^2$. The spectral efficiency is $R=3.6$ bpcu.}
	\label{fig: outage probability high rate}
\end{figure}

\subsection{Numerical results for $B=3$}
\label{Sec: mfg}

When $B=3$, $\mathcal{O}=\theta_1$ (see Eq. (\ref{eq: P for B=3})), and the optimization criterion for the upper bound on the outage probability is to find $\theta_1$ so that $I_{\mathcal{S}_{\textrm{p}}}^{-1}(\theta_1,B R)$ is minimized. Next, a constellation expansion is performed to further minimize the upper bound as well as the lower bound on the outage probability. 

We aim to transmit $R=0.9$ bpcu. We construct constellations $\Omega_z$ that are invariant to a cyclic shift of the components of the constellation points. Hence, the constellations are invariant to a rotation of $2 \pi/3$ with respect to the bisector $(1,1,1)$ (this can be verified by evaluating $P$ in Eq. (\ref{eq: P for B=3}) for this rotation angle). An example of such constellations are the Cartesian products of three identical one-dimensional constellations, such as $8$-$\mathcal{R}^3 = $(BPSK$)^3$ (constellation points are corners of a cube) and $64$-$\mathcal{R}^3 = (4$-PAM$)^3$ (constellation points are the corners of four nested cubes). More sophisticated constellations can also be considered. For example, $16$-$\mathcal{R}^3$ can be constructed by considering 5 sets of three constellation points $\{ (a_j, b_j, c_j), (b_j, c_j, a_j), (c_j, a_j, b_j)  \}$ \footnote{The three constellation points are in a plane perpendicular to the bisector.} for $j=1, \ldots, \ldots, 5$, and adding a constellation point located on the bisector. To build the actual constellations, a few design parameters have to be specified, such as the distances between the planes perpendicular to the bisector, the radii of the circles containing three points of the constellation in the planes, how many points on the bisector are taken and how many points in groups of three are taken. These design parameters will impact on the final performance. 

Because it is not the main topic of the paper, we will not elaborate on many different designs for the multidimensional constellations. We compare the performance of $8$-$\mathcal{R}^3$ ($R_c = 0.9$), $16$-$\mathcal{R}^3$ ($R_c = 0.675$) and $64$-$\mathcal{R}^3$ ($R_c = 0.45$) in Fig. \ref{fig: outage rotated mfg}. 
\begin{figure}
	\centering
	\subfigure[The optimization of the rotation angle for $B=3$ is shown.]{\includegraphics[angle=-90, width=0.40 \textwidth]{./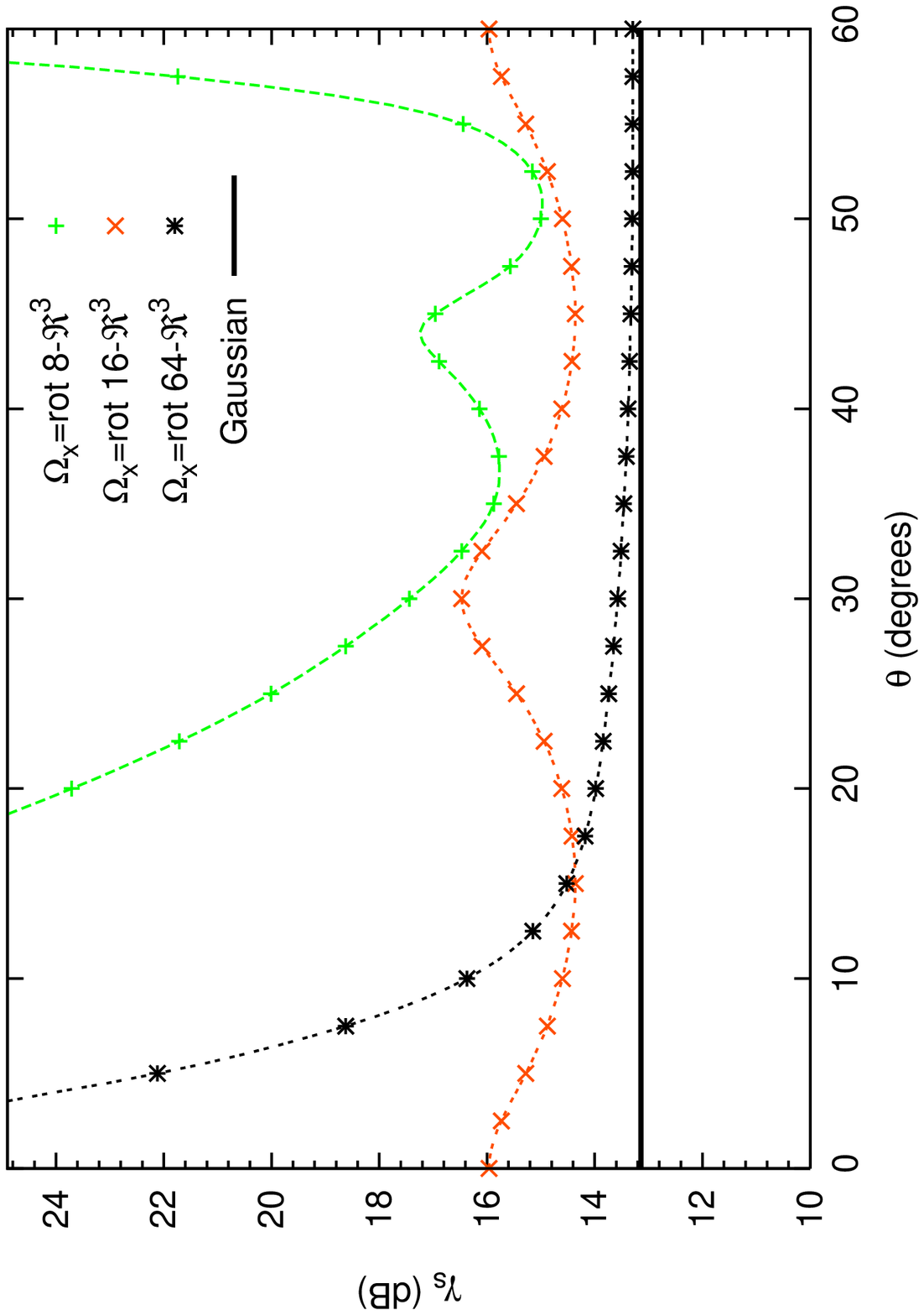}\label{fig: optim theta mfg}}
	\quad 
	 \subfigure[The outage probabilities of the BF channel for $B=3$ are shown.]{\includegraphics[angle=-90, width=0.40 \textwidth]{./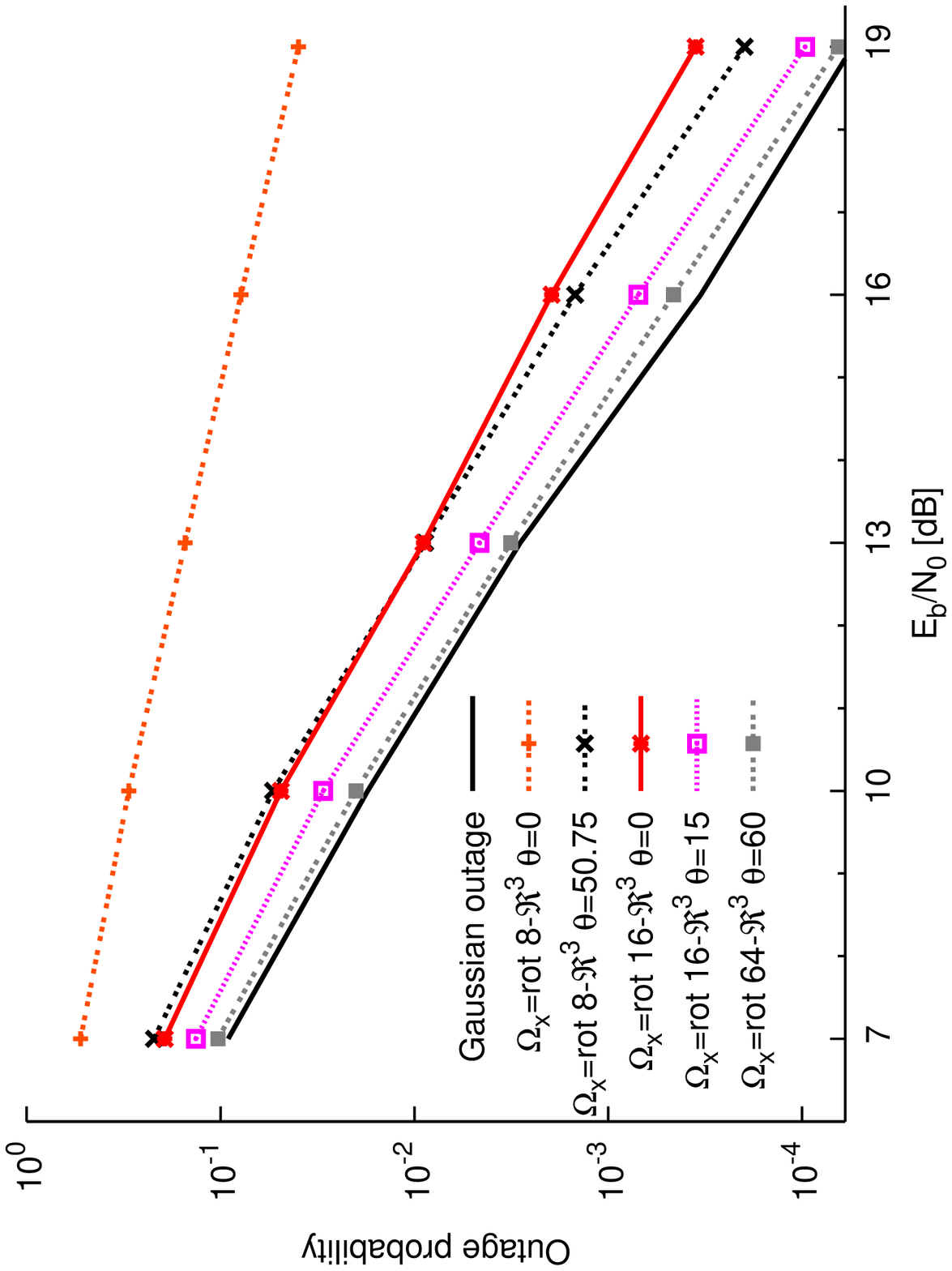}\label{fig: outage mfg}}
	\caption{The optimization of $\theta_1$ and the optimized outage probabilities are shown for $B=3$. The transmitted rate is $R=0.9$ bpcu.}
	\label{fig: outage rotated mfg}
\end{figure}
Note that the symmetry point in Fig. \ref{fig: optim theta mfg} is not $\theta_1=45$ degrees, as for $B=2$, but it is $60$ degrees. The $8$-$\mathcal{R}^3$ constellation is equal to the Cartesian product of three BPSK constellations and the $64$-$\mathcal{R}^3$ constellation is the Cartesian product of three 4-PAM constellations. For the $16$-$\mathcal{R}^3$ constellation, we chose to take a circle of three points in a plane containing the origin perpendicular to the bisector, and two circles with 6 points, each in a plane next to the first plane, perpendicular to the bisector. Finally the origin is also chosen as a constellation point. The rounded coordinates of the points of the constellation $\Omega_z$ are $\{ (0.0, 1.0, -1.0), (1.9, 0.12, 0.12), (1.3, -0.5, 1.3), (0.5 -1.3 -1.3), (-1.9, -0.1, -0.1), (0,0,0) \}$  as well as the cyclic shifts of these coordinates.

\subsection{Practical implications}
\label{subsec: Practical implications}

The motivation for this work was to solve the theoretical problem of determining the optimal precoding matrix $P$ in terms of the outage probability for a given combination $\{R_c, \Omega_z\}$ satisfying a target spectral efficiency $R$. Besides the theoretical relevance, the numerical results suggest a practical relevance because the impact of this optimization on the outage probability is significant in the case of large coding rates $R_c$ \footnote{Of course, when the coding rate decreases, the impact of the error-correcting code on the error rate performance increases yielding a decreased impact of the precoder on the latter.}. This case might be preferred upon the case of small coding rates (and thus large constellation sizes for a given spectral efficiency $R$) because large constellation sizes complicates the system design (requiring a joint optimization of the constellation labeling and error-correcting code which is not trivial) and is sometimes even not allowed (e.g., $|\Omega_z| \leq 64$ in $4$G systems).

The consequences of our work on practical code design are as follows. The WER of a practical system is lower bounded by the outage probability (the lower bound is achievable). To minimize the WER, first its lower bound must be minimized. Therefore, the multidimensional constellation $\Omega_z$ and the rotation angle interval for the practical code should be taken as obtained in this work. Next, the labelling, the rotation angle within the rotation angle interval obtained in this paper, and the error-correcting code must be determined. The last three optimizations are the topics of another work \cite{duy2010cmf}, but it is important to understand that these optimizations are based on what is presented in this paper. In Fig. \ref{fig: results code}, we show for $B=2$ and $R=0.9$ bpcu that the optimized outage probability can be approached very closely by the WER of a practical system. 

\begin{figure}
	\centering
	\includegraphics[angle=-90, width=0.6 \textwidth]{./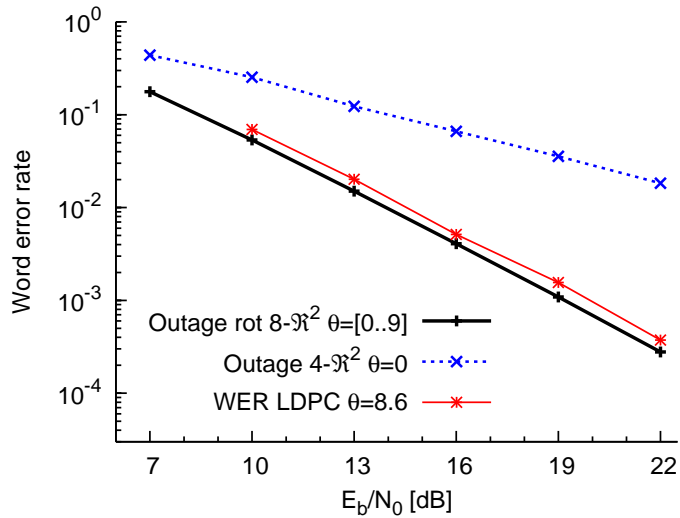}
	\caption{The optimized outage probability can be approached very closely by the WER of a practical error-correcting code. The block length is $N=5000$, but the results are valid for all block lengths (tested for $N=2000$ and $N=10000$).}
	\label{fig: results code}
\end{figure}

\section{Conclusions}

We have studied  the effect of linear precoding on  the outage probability of block  fading channels. We have analyzed the outage  boundaries in the  fading space  and established outer and inner boundaries with simple shapes which yield an easy  optimization of the outage probability for a discrete constellation, for an arbitrary number of blocks in the fading channel, real-valued or complex constellations, low or high spectral efficiency. The combination of a constellation expansion and an optimized precoding matrix, has shown to be sufficient to closely approach the outage probability corresponding to an i.i.d. Gaussian input alphabet. With this work, the practical code performance can be optimized by admitting the parameters obtained here. 

\section*{Acknowledgement}

The work of Joseph Boutros and part of the work of Dieter Duyck were supported by the Broadband Communications Systems project funded by Qatar Telecom (Qtel). Dieter Duyck and Marc Moeneclaey wish to acknowledge the activity of the Network of Excellence in Wireless COMmunications NEWCOM++ of the European Commission (contract n. 216715) that motivated this work.

\appendix

\subsection{Symmetry conditions for multidimensional constellations}
\label{appendix: symmetry}

The points $\{\bs{\alpha}_{b,\textrm{o}}, b=1,\ldots, B \}$ correspond to the case that all fading gains are zero, except one, whose value is the scaling factor of the projection of the multidimensional constellation on the b-th coordinate axis $x_b$, so that the mutual information between $\mb{X}$ and $\mb{Y}$ is equal to the spectral efficiency $B R$. In other words, if the projection of the multidimensional constellation on each coordinate axis yields the same set of points, then the magnitudes of the points $\{ \bs{\alpha}_{b,\textrm{o}}, b=1,\ldots, B \}$ are equal.

First, we restrict our attention to the case that $B=2$. Consider the constellation point $\mb{z}^{(i)} = (u^{(i)}_1, u^{(i)}_2) \in \Omega_z$. The projection of the multidimensional constellation on each coordinate axis yields the same set of points if for each point $\mb{z}^{(i)}$, the points $\mb{z}^{(j)} = (u^{(j)}_1, u^{(j)}_2)$ and $\mb{z}^{(q)} = (u^{(q)}_1, u^{(q)}_2)$ exist, $i,j,q \in [1, \ldots, 2^m]; ~j,q \neq i$, so that 
\begin{equation*}
\left\{
\begin{array}{c}
\cos(\theta)u^{(i)}_1 - \sin(\theta) u^{(i)}_2  = \sin(\theta) u^{(j)}_1 + \cos(\theta) u^{(j)}_2 \\
\sin(\theta) u^{(i)}_1 + \cos(\theta)u^{(i)}_2   = \cos(\theta) u^{(q)}_1 -  \sin(\theta) u^{(q)}_2 . 
\end{array}
\right. 
\end{equation*}
In other words, $x^{(i)}_1 = x^{(j)}_2$ and $x^{(i)}_2 = x^{(q)}_1$, where $\mb{x}^{(i)}, \mb{x}^{(j)}$ and $\mb{x}^{(q)}$ are the corresponding points of $\mb{z}^{(i)}, \mb{z}^{(j)}$ and $\mb{z}^{(q)}$ in $\Omega_x$. It can be easily verified that this is always fulfilled if 
\begin{equation*}
\left\{
\begin{array}{c}
(u^{(i)}_1,u^{(i)}_2) = (u^{(j)}_2,-u^{(j)}_1) \\
(u^{(i)}_1,u^{(i)}_2) = (-u^{(q)}_2,u^{(q)}_1),
\end{array}
\right.
\end{equation*}
or in other words, the constellation is invariant under a rotation of $\pi/2$, which is obtained after a cyclic shift and a reflection, which proves what was claimed. An example of such a constellation is the constellation shown in Fig. \ref{fig: balanced}.

Now consider the case that $B>2$. Consider the $B$-dimensional constellation $\Omega_z$ that contains $M$ points. When $\mb{z}$ belongs to $\Omega_z$, then also $\mb{z}^{(1)}, \ldots, \mb{z}^{(B-1)}$ belong to $\Omega_z$, where $\mb{z}^{(b)}$ is obtained from $\mb{z}$ by a $b$-fold upward cyclic shift of the components of $\mb{z}$: $\mb{z}^{(b)} = C^b \mb{z}$, where $C$ is obtained as a cyclic shift to the right of the columns of the $B \times B$ identity matrix. Note that the number of constellation points does not need to be a multiple of $B$: a subset of the constellation may consist of an arbitrary number of constellation points of the type $[z, z, \ldots, z]^T$ which remain invariant under a cyclic shift.

Consider an orthogonal circulant $B \times B$ precoding matrix $P$. Therefore, $P = C P C^T$ (a circulant matrix remains the same when applying a left cyclic shift to the columns and an upward cyclic shift to the rows). The transformation of $\mb{z}^{(b)}$ is
\begin{equation}
	P \mb{z}^{(b)} = P C^b \mb{z} = (C P C^T)C^b \mb{z} = C P C^{b-1} \mb{z} = \ldots = C^b P \mb{z} = C^b \mb{x} = \mb{x}^{(b)},
\end{equation}
where we exploit that $C$ is an orthogonal matrix. 

Consider the matrix $\left(\mb{x}, \mb{x}^{(1)}, \ldots, \mb{x}^{(B-1)} \right)$. As the $(i+1)$-th row is obtained as a cyclic shift to the left of the $i$-th row, the set of components in a row is the same for each row. A constellation point in $\Omega_z$ of the type $(z, z, \ldots, z)^T$ is transformed into a constellation point in $\Omega_x$ of the type $(x, x, \ldots, x)^T$. We conclude that the projection of the constellation $\Omega_x$ on any of the coordinate axes yields the same set of points.

\subsection{Proof of Lemma \ref{Lemma: Cartesian}}
\label{app: lemma}

The mutual information $I\left(\bs{\alpha},  \gamma,  P=I\right)$ is equal to $\frac{1}{B} I(\mb{Y};\mb{X}|\bs{\alpha}, \gamma)$. This can be split in the sum of $B$ terms through the chain rule \cite{cover2006eit}:
\begin{equation*}
	I\left(\bs{\alpha},  \gamma,  P=I\right)  = \frac{1}{B} \left( I\left(Y_1;\mb{X} | \alpha_1,  \gamma\right) + I\left(Y_2;\mb{X} | Y_1, \alpha_1, \alpha_2,  \gamma\right) + \ldots + I\left(Y_B;\mb{X} | Y_1, \ldots, Y_{B-1}, \bs{\alpha},  \gamma\right) \right).  
\end{equation*}
The mutual information 
\begin{align*}
	I\left( Y_b;\mb{X} | Y_1, \ldots, Y_{b-1}, \bs{\alpha},  \gamma \right) &= H\left( Y_b| Y_1, \ldots, Y_{b-1}, \bs{\alpha},  \gamma \right)  - H\left( Y_b|\mb{X}, Y_1, \ldots, Y_{b-1}, \bs{\alpha},  \gamma \right) \\
	&= H\left( Y_b|Y_1, \ldots, Y_{b-1}, \bs{\alpha},  \gamma \right) - H\left( Y_b|X_b, \mb{X}_{\sim b}, Y_1, \ldots, Y_{b-1}, \bs{\alpha},  \gamma\right) \\
	&= H\left( Y_b|Y_1, \ldots, Y_{b-1}, \bs{\alpha},  \gamma \right) - H\left( Y_b|X_b,Y_1, \ldots, Y_{b-1}, \bs{\alpha},  \gamma \right) \\
	&= I\left( Y_b ; X_b | Y_1, \ldots, Y_{b-1}, \bs{\alpha},  \gamma \right),
\end{align*}
where $\mb{X}_{\sim b} = [X_1, \ldots, X_{b-1}, X_{b+1}, \ldots, X_B]$. 
Because 
\begin{equation}
	H\left(Y_b | Y_1, \ldots, Y_{b-1}, \bs{\alpha},  \gamma\right) \leq  H\left(Y_b|\bs{\alpha},  \gamma \right)
\end{equation}
and 
\begin{equation}
	H\left(Y_b | X_b , Y_1, \ldots, Y_{b-1}, \bs{\alpha},  \gamma\right) = H\left(Y_b | X_b, \bs{\alpha},  \gamma\right), 
\end{equation}
it is clear that 
\begin{equation}
	I\left(X_b;Y_b | Y_1, \ldots, Y_{b-1}, \bs{\alpha},  \gamma\right) \leq I\left(X_b;Y_b | \alpha_b,  \gamma\right),
\end{equation}
where the upper bound equals $I_{\mathcal{S}_{\textrm{p}}}\left(\alpha_b^2 \gamma, P\right)$. 

\subsection{Proof of Props. \ref{bound: gauss} and \ref{bound: bpsk}}
\label{app: Proof of Prop. bound: gauss}

Consider a function $F(x)$ that is concave for $x \geq 0$. Hence, for arbitrary $L$, 
\begin{equation}
\label{eq: Jens ineq}
	F \left( \sum_{l=1}^L p_l \gamma_l \right) \geq \sum_{l=1}^L p_l F(\gamma_l),
\end{equation}
for $\sum_{l=1}^L p_l = 1, ~ p_l \geq 0$ and $\gamma_l \geq 0$ for $l=1, \ldots, L$, by Jensen's inequality. In addition, assume that $F(0)=0$. 

We construct the function $\sum_{b=1}^B F(\gamma_b)$ where $\sum_{b=1}^B \gamma_b=C$, $\gamma_b \geq 0$ and $\gamma_b = \gamma \alpha_b^2$  for $b=1,\ldots,B$. Hence, $\bs{\alpha}$ is on the surface of a hypersphere with squared radius $C/\gamma$.

From Eq. (\ref{eq: Jens ineq}) with $L = B$, we obtain
\[ \sum_{b=1}^B F(\gamma_b) = B \left( \frac{1}{B} \sum_{b=1}^B F(\gamma_b) \right) \leq B F\left( \frac{1}{B} \sum_{b=1}^B \gamma_b \right) = B .F\left(\frac{C}{B} \right) \]
The maximum value $B . F\left(\frac{C}{B} \right)$ is achieved for $\gamma_b = \frac{C}{B},~ b = 1,\ldots, B$.

Further, using Eq. (\ref{eq: Jens ineq}) with $L=2$,
\[ F (\gamma_b) = F\left( \frac{\gamma_b}{C} C + \frac{C-\gamma_b}{C}~ 0  \right) \geq \frac{\gamma_b}{C} F\left( C \right) + \frac{C-\gamma_b}{C} F\left( 0  \right) =  \frac{\gamma_b}{C} F\left( C \right)  \]
Summing over $b$ yields
\[ \sum_{b=1}^B F (\gamma_b) \geq \frac{F(C)}{C}  \sum_{b=1}^B \gamma_b = F(C). \]
The minimum value $F(C)$ is achieved when, for given $l \in \{1, \ldots, B  \}, ~ \gamma_l=C$ and $\gamma_b=0,$ for $b \neq l$.

\subsection{Proof of Prop. \ref{bound: bpsk with precoding low SNR}}
\label{app: proof prop. precoding low SNR}

The instantaneous SNR is $\frac{E_s}{N_0}=\frac{\mathbb{E}[|\mb{t}^2|]}{\mathbb{E}[|\mb{w}^2|]} = \gamma |\bs{\alpha}|^2$. In this proof, we assume real constellations for simplicity. The proof can be straightforwardly extended to complex extensions. We denote the real part of the complex noise $\mb{w}$ by $\mb{w^\prime}$.

Consider the mutual information $I\left(\bs{\alpha}, \gamma, P\right)$ of the constellation $\Omega_t$, given the fading gains $\bs{\alpha}$ (Eq. (\ref{eq.: mutual info})). This expression can be reformulated in terms of $\gamma$:
\begin{equation*}
	I\left(\bs{\alpha}, \gamma, P\right) =  \frac{m}{B} - \frac{2^{-m}}{B} \sum_{\mb{x} \in \Omega_x}  \E_{\mb{y}|\mb{x}} \left[ \Log_2 \left( \sum_{\mb{x}^{\prime} \in \Omega_x} \exp\left[ \gamma \left( d^2(\mb{y},\bs{\alpha}\cdot\mb{x}) - d^2(\mb{y}, \bs{\alpha}\cdot\mb{x^{\prime}}) \right) \right] \right) \right],
\end{equation*}
where $\E_{\mb{y}|\mb{x}}$ can be replaced by an expectation over the noise, $\E_{\mb{w^\prime}}$, ${w_b^\prime} \sim \mathcal{N}(0,1/(2\gamma))$. The argument of the exponential functions can be simplified, so that
\begin{equation*}
I\left(\bs{\alpha}, \gamma, P \right) =  \frac{m}{B} - \frac{2^{-m}}{B} \sum_{\mb{x} \in \Omega_x} \E_{\mb{w^\prime}} \left[ \Log_2 \left( \sum_{\mb{x}^{\prime} \in \Omega_x} \exp\left[ -\gamma d^2(\bs{\alpha}\cdot\mb{x}, \bs{\alpha}\cdot\mb{x^{\prime}}) + \sum_{b=1}^B(2 \gamma {w_b^\prime} f(b) ) \right] \right) \right],
\end{equation*}
where $f(b) = \alpha_b (x_b-x_b')$.
This expression can be further simplified by approximating the exponential functions and logarithms by their respective Taylor series for small $\gamma \alpha_b^2, ~ \forall b=1, \ldots, B$. Next, the expectation of the expression over the random vector $\mb{w^\prime}$ can be replaced by an expectation over $\gamma \mb{w^\prime}$, where $\gamma \mb{w^\prime} \sim \mathcal{N}(\mb{0},\frac{\gamma}{2} \mb{I})$ ($\mb{I}$ is the identity matrix). Therefore, we can drop all terms that are proportional to $\mathbb{E}_{\gamma {w_b^\prime}}[\gamma {w_b^\prime}]$ and replace $\mathbb{E}_{\gamma {w_b^\prime}}\left[(\gamma {w_b^\prime})^2\right]$ in all terms proportional to $\mathbb{E}_{\gamma {w_b^\prime}}\left[(\gamma {w_b^\prime})^2\right]$ by $\frac{\gamma}{2}$. Now, after some calculus,we obtain  
\begin{equation}
	I\left(\bs{\alpha}, \gamma, P\right) = \frac{\gamma \sum_b \alpha_b^2 \textrm{Var}(X_b)}{B  \Log(2)} + \sum_{b=1}^B o(\gamma \alpha_b^2),
\label{eq: var}
\end{equation}
where $\textrm{Var}(X_b)$ is the variance of the $b$-th component of the points of constellation $\Omega_x$. The validity of (\ref{eq: var}) for small instantaneous SNR has been verified numerically. As the projection of $\Omega_x$ on either coordinate axis yields the same set of points, this variance is independent of $b$. Hence, for small instantaneous SNR, the mutual information remains constant for the set of fading points where $\sum_{b=1}^{B}\alpha_b^2$ is constant and the outage boundary coincides with a $B$-hypersphere. As a consequence, $\alpha_{\textrm{o}}$ and $\sqrt{B} \alpha_{\textrm{e}}$ are equal and by their definition, it is clear that for low instantaneous SNR, the outage region coincides with the $B$-hypersphere $\sum_{b=1}^{B}\alpha_b^2 = \alpha_{\textrm{o}}^2 = \sqrt{B} \alpha_{\textrm{e}}$. 

\subsection{Proof of Prop. \ref{bound: bpsk with precoding high SNR}}
\label{app: proof prop. precoding high SNR}

Recall that $I\left(\bs{\alpha}, \gamma,  P\right) = \frac{1}{B} I(\mb{T};\mb{Y}|\bs{\alpha}, \gamma)$. For high instantaneous SNR, it was conjectured in \cite{Perez2010, Perez2008} that maximizing (minimizing) the minimal Euclidean distance $d_{\textrm{min}}$ of a constellation maximizes (minimizes) the mutual information on a Gaussian channel. This was based on tight upper and lower bounds on the mutual information, established in \cite{Perez2010, Perez2008} (generalizing \cite{lozano2006}). More specifically, for $d_{\textrm{min}}(\bs{\alpha}) > 0$, 
\begin{equation}
 m - K_1 \frac{e^{-d_{\textrm{min}}^2(\bs{\alpha}) \gamma/4} }{ d^{3}_{\textrm{min}}(\bs{\alpha}) \sqrt{\gamma}} 
\leq I(\mb{T};\mb{Y}|\bs{\alpha}, \gamma) 
\leq m - \frac{e^{-d_{\textrm{min}}^2(\bs{\alpha}) \gamma/4} }{ d_{\textrm{min}}(\bs{\alpha}) \sqrt{\gamma}}\left(K_2 -\frac{K_3}{d^{2}_{\textrm{min}}(\bs{\alpha}) \gamma} \right), \label{eq: mutual information bounds}
\end{equation}
where $d_{\textrm{min}}(\bs{\alpha})$ is the minimal distance of the constellation $\Omega_t$, and where $K_1, K_2$, and $K_3$ are constants. When $d_{\textrm{min}}^2(\bs{\alpha})=0$, the constellation $\Omega_t$ effectively counts less than $2^m$ points and the mutual information $I(\mb{T};\mb{Y}|\bs{\alpha}, \gamma)$ will converge to a value strictly smaller than $m$ for increasing SNR. We now prove that, for $\bs{\alpha}_1, \bs{\alpha}_2$, satisfying $d_{\textrm{min}}^2(\bs{\alpha}_2)>d_{\textrm{min}}^2(\bs{\alpha}_1)$, there exists an SNR-value $\gamma$ above which it is impossible to have 
\begin{equation}
	I(\mb{T};\mb{Y}|\bs{\alpha}_1, \gamma) > I(\mb{T};\mb{Y}|\bs{\alpha}_2, \gamma).	\label{eq: impossible mut info}
\end{equation}	
This is obvious for $d_{\textrm{min}}^2(\bs{\alpha}_1)=0$. Let us check whether (\ref{eq: impossible mut info}) is invalid when $d_{\textrm{min}}^2(\bs{\alpha}_1)>0$ by using contradiction. More specifically, let us assume that (\ref{eq: impossible mut info}) is true. Combining with (\ref{eq: mutual information bounds}), it follows that
\begin{equation*}
m - \frac{e^{-d_{\textrm{min}}^2(\bs{\alpha}_1) \gamma/4 }}{ d_{\textrm{min}}(\bs{\alpha}_1) \sqrt{\gamma}}\left(K_2 -\frac{K_3}{d^{2}_{\textrm{min}}(\bs{\alpha}_1) \gamma} \right) >	m - K_1 \frac{e^{-d_{\textrm{min}}^2(\bs{\alpha}_2) \gamma/4 } }{ d^{3}_{\textrm{min}}(\bs{\alpha}_2) \sqrt{\gamma}},
\end{equation*}
yielding
\begin{equation}
	e^{\frac{\gamma}{4}\underbrace{\left( d_{\textrm{min}}^2(\bs{\alpha}_2) - d_{\textrm{min}}^2(\bs{\alpha}_1)  \right)}_{>0}} < \frac{K_1 d_{\textrm{min}}(\bs{\alpha}_1)}{d_{\textrm{min}}^2(\bs{\alpha}_2) \left( K_2 - K_3/ (d_{\textrm{min}}^2(\bs{\alpha}_1) \gamma)  \right)}. \label{contradiction eq}
\end{equation}
The right-hand side of (\ref{contradiction eq}) is a constant close to $\frac{K_1 d_{\textrm{min}}^2(\bs{\alpha}_1)}{d_{\textrm{min}}^2(\bs{\alpha}_2) K_2}$ for large SNR. Clearly, there exists a large SNR value for which (\ref{contradiction eq}), and thus (\ref{eq: impossible mut info}) is not true, confirming the conjecture made in \cite{Perez2010, Perez2008}.

\textbf{Outer bound:}\\
Consider two points $\mb{x}^{(i)} = (u_1^{(i)}, \ldots, u_{B}^{(i)})$ and $\mb{x}^{(j)} = (u_1^{(j)}, \ldots, u_{B}^{(j)})$ from the constellation $\Omega_x$. The corresponding points $\mb{t}^{(i)}$ and $\mb{t}^{(j)}$ from the faded constellation $\Omega_t$ have a squared Euclidean distance given by 
\begin{equation}
	\left| \mb{t}^{(i)} - \mb{t}^{(j)} \right|^2 = \sum_{b=1}^{B} \alpha_b^2 \left( u_b^{(i)} - u_b^{(j)} \right)^2.
\end{equation}
Let us denote by $b^*$ the value of $b \in \{1, \ldots, B \}$ for which $\left( u_b^{(i)} - u_b^{(j)} \right)^2$ is minimum. For the B-hypersphere $\sum_{b=1}^{B} \alpha_b^2 = \alpha_{\textrm{o}}^2$, we obtain
\begin{equation}
\label{eq: bound high snr}
	\left| \mb{t}^{(i)} - \mb{t}^{(j)} \right|^2 \geq \alpha_{\textrm{o}}^2 \left( u_{b^*}^{(i)} - u_{b^*}^{(j)} \right)^2,
\end{equation}
where equality in Eq. (\ref{eq: bound high snr}) is achieved when $\alpha_{b^*} = \alpha_{\textrm{o}}$. When $b^*$ is not unique, equality in Eq. (\ref{eq: bound high snr}) holds when $\alpha_{b^*} = \alpha_{\textrm{o}}$ holds for any $b^*$ that minimizes $\left( u_b^{(i)} - u_b^{(j)} \right)^2$. Hence, the minimum distance $d_{\textrm{min}}(\bs{\alpha})$ for the constellation $\Omega_t$ is given by
\begin{equation}
	d_{\textrm{min}}(\bs{\alpha}) = \alpha_{\textrm{o}} \min_{\substack{i,j \in \{1, \ldots, M \} \\ i\neq j}} \min_{b\in \{1, \ldots, B \}} \left| u_b^{(i)} - u_b^{(j)} \right|
\end{equation}
As the constellation $\Omega_x$ is such that its projection on either coordinate axis yields the same set of points, the minimum distance $d_{\textrm{min}}(\bs{\alpha})$ on the $B$-hypersphere is achieved for either fading gain equal to $\alpha_{\textrm{o}}$ (and the remaining $B-1$ fading gains equal to $0$). Hence, according to relation between minimal distance and mutual information, $I\left(\bs{\alpha}, \gamma,  P\right)$ with $\bs{\alpha}$ on a $B$-hypersphere is minimized on the coordinate axes at high SNR. Taking $\alpha_{\textrm{o}}$ so that $I\left(\bs{\alpha}, \gamma,  P\right) = R$ on the coordinate axes yields $I\left(\bs{\alpha}, \gamma,  P\right) \geq R$ on the other points of the $B$-hypersphere.

\textbf{Inner bound:}\\
We must prove that the minimal distance of $\Omega_t$ is maximized in $\bs{\alpha}_{\textrm{e}}$, or
\begin{equation}
	\bs{\alpha}_{\textrm{e}} = \argmax{\substack{\bs{\alpha} \\ |\bs{\alpha}|^2 = B \alpha_{\textrm{e}}^2}} ~ \left( \min_{i,j} \sum_{b=1}^{B} \alpha_b^2 \left( u_b^{(i)} - u_b^{(j)} \right)^2 \right)
\end{equation}
For $\bs{\alpha} = \bs{\alpha}_{\textrm{e}}$, we obtain $\sum_{b=1}^{B} \alpha_b^2 \left( u_b^{(i)} - u_b^{(j)} \right)^2 = \alpha_{\textrm{e}}^2 \sum_{b=1}^{B} \left( u_b^{(i)} - u_b^{(j)} \right)^2$, which is minimized to $\alpha_{\textrm{e}}^2 d^2$ when $\mb{x}^{(i)}$ and $\mb{x}^{(j)}$ are points at minimum distance $d$ in the constellation $\Omega_x$. For any $\bs{\alpha}$ with $|\bs{\alpha}|^2 = B \alpha_{\textrm{e}}^2$, $\bs{\alpha} \neq \bs{\alpha}_{\textrm{e}}$, we will prove that $\min_{i,j} \sum_{b=1}^{B} \alpha_b^2 \left( u_b^{(i)} - u_b^{(j)} \right)^2 \leq \alpha_{\textrm{e}}^2 d^2$, which is what is claimed. We first restrict our attention to the case $B>2$. We consider $B$ pairs $\left\{ \left( \mb{x}^{(n)}, \mb{x'}^{(n)} \right), n=0, \ldots, B-1 \right\}$ of points in $\Omega_x$ and their corresponding points $\left\{ \left( \mb{t}^{(n)}, \mb{t'}^{(n)} \right), n=0, \ldots, B-1 \right\}$ in $\Omega_t$, such that $\mb{x}^{(n)} = T^n \mb{x}$ and $\mb{x'}^{(n)} = T^n \mb{x'}$, with $T$ denoting the matrix operator that causes a downward cyclic shift. We select $\mb{x}$ and $\mb{x'}$ such that $|\mb{x}-\mb{x'}| = d$; obviously, $|\mb{x}^{(n)}-\mb{x'}^{(n)}| = d$ for $n=0, \ldots, B-1$. Because $\sum_{b=1}^{B} \left( u_b^{(n)} - u_b^{'(n)} \right)^2 = \sum_{n=0}^{B-1} \left( u_b^{(n)} - u_b^{'(n)} \right)^2$, we obtain that, for any $\bs{\alpha}$ with $|\bs{\alpha}|^2 = B \alpha_{\textrm{e}}^2$, the average distance $\mathbb{E}_n |\mb{t}^{(n)}-\mb{t'}^{(n)}|^2$ is
\begin{equation}
	\frac{1}{N} \sum_{n=0}^{B-1} \left(  \sum_{b=1}^{B} \alpha_b^2 \left( u_b^{(n)} - u_b^{'(n)} \right)^2 \right) = \frac{1}{N} \sum_{b=1}^{B} \alpha_b^2 \left(  \underbrace{\sum_{n=0}^{B-1} \left( u_b^{(n)} - u_b^{'(n)} \right)^2}_{d^2} \right) = d^2 \alpha_{\textrm{e}}^2.
\end{equation}
Hence, $\min_{n} \sum_{b=1}^{B} \alpha_b^2 \left( u_b^{(n)} - u_b^{'(n)} \right)^2 \leq d^2 \alpha_{\textrm{e}}^2$. For $B=2$, the proof has to be slightly modified: we consider 4 pairs $\left\{ \left( \mb{x}^{(n)}, \mb{x'}^{(n)} \right), n=0, \ldots, B-1 \right\}$ of points in $\Omega_x$, such that $\mb{x}^{(n)} = T^n \mb{x}$ and $\mb{x'}^{(n)} = T^n \mb{x'}$, with $T$ denoting the matrix operator representing a counter clockwise rotation of $\pi/2$. We select $\mb{x}$ and $\mb{x'}$ such that $|\mb{x}-\mb{x'}| = d$; obviously, $|\mb{x}^{(n)}-\mb{x'}^{(n)}| = d$ for $n=0, \ldots, 3$. For any $\bs{\alpha}$ with $|\bs{\alpha}|^2 = 2 \alpha_{\textrm{e}}^2$, we have, considering that $T^2 = -I$,
\begin{equation}
	\frac{1}{4} \sum_{n=0}^{3} \left(  \sum_{b=0}^{1} \alpha_b^2 \left( u_b^{(n)} - u_b^{'(n)} \right)^2 \right) = \frac{1}{4} \sum_{b=0}^{1} \alpha_b^2 \left(  \underbrace{\sum_{n=0}^{3} \left( u_b^{(n)} - u_b^{'(n)} \right)^2}_{2 d^2} \right) = d^2 \alpha_{\textrm{e}}^2.
\end{equation}
The remainder of the proof follows the same lines as for $B>2$.  

\subsection{Proof of full diversity} 
\label{app: prop. full diversity} 

Let us find an expression for $\alpha_{\textrm{o}}$. Therefore, we consider that all fading gains are zero, but one, so that the rate achieved by $\Omega_t$ (which is now a scaling of $\mathcal{S}_{\textrm{p}}$) over that single slot must be at least $R B$, or $I_{\mathcal{S}_{\textrm{p}}}(\alpha_{\textrm{o}}^2 \gamma, P) = B R$, so that
\vspace{-0.1cm}
\begin{equation}
	\label{eq: alpha_bo}
 \alpha_{\textrm{o}}^2 = \frac{I_{\mathcal{S}_{\textrm{p}}}^{-1}(B R, P)}{\gamma},
\end{equation}
Now, as the outage region $V_{\textrm{o}}$ is outer bounded by the hypersphere with radius $\alpha_{\textrm{o}}$, we have
\begin{equation}
\label{eq: pout alphabo}
	P_{\textrm{out}}(\gamma, P, R)  \leq \Prob(\alpha_1^2 + \alpha_2^2 + \ldots + \alpha_B^2 < \alpha_{\textrm{o}}^2).
\end{equation}
Because the fading gains $\{ \alpha_b$, $b=1, \ldots, B \}$ are i.i.d. Rayleigh, the cumulative distribution function of the chi-square distribution with $2B$ degrees of freedom is \cite{proakis2000dc} 
\[\Prob(\alpha_1^2 + \alpha_2^2 + \ldots + \alpha_B^2 < x) = 1 - e^{-x} \sum_{k=0}^{B-1} \frac{x^k}{k!} .\]
The right hand side can be manipulated as follows:
\begin{equation*}
	1 - e^{-x} \sum_{k=0}^{B-1} \frac{x^k}{k!} = \frac{e^x - \sum_{k=0}^{B-1}\frac{x^k}{k!} }{e^x} 
	= \frac{\sum_{k=B}^{\infty}\frac{x^k}{k!} }{\sum_{m=0}^{\infty}\frac{x^m}{m!}} 
	= \frac{x^B \sum_{k=0}^{\infty}\frac{x^k}{(B+k)!} }{\sum_{m=0}^{\infty}\frac{x^m}{m!}} 
	\leq x^B,
\end{equation*}
so that from Eqs. (\ref{eq: alpha_bo}) and (\ref{eq: pout alphabo})
\begin{equation}
	P_{\textrm{out}}(\gamma, P, R)  \leq \alpha_{\textrm{o}}^{2B} \propto \frac{1}{\gamma^B}.
\end{equation}
Hence, with increasing $\gamma$ the outage probability goes to zero not slower than $\gamma^{-B}$, so the diversity order is at least $B$. As there are only $B$ i.i.d. fading gains involved in the transmission of a codeword, the diversity order is exactly $B$.  
 
\small


\begin{thebibliography}{99}

\bibitem{bayer2004nac} D. Bayer-Fluckiger, F. Oggier, E. Viterbo, 
``New algebraic constructions of rotated $Z^n$-lattice constellations for the Rayleigh fading channel ,''
{\em IEEE Trans. on Inf. Theory}, vol.~50, no.~4, pp. 702-714, 2004.

\bibitem{biglieri1998fci} E. Biglieri, J. Proakis, and S. Shamai,
``Fading channels: information-theoretic and communications aspects,''
{\em IEEE Trans. on Inf. Theory}, vol.~44, no.~6, pp. 2619-2692, Oct. 1998.

\bibitem{boutros1998ssd} J.J. Boutros and E. Viterbo, 
``Signal space diversity: a power- and bandwidth-efficient diversity technique for the Rayleigh fading channel,''
{\em IEEE Trans. on Inf. Theory}, vol.~44, no.~4, pp. 1453-1467, July 1998.

\bibitem{bou2005aoc} J.J. Boutros, A. Guill\'en i F\`abregas, and E. Calvanese,
``Analysis of coding on non-ergodic block fading channels,''
{\em Allerton Conf. on Communication and Control}, Illinois, 2005.

\bibitem{cover2006eit} T.M. Cover and J.A. Thomas,
{\em Elements of Information Theory}, New York, Wiley, 2006.

\bibitem{duy2010cmf} D. Duyck, J.J. Boutros, and M. Moeneclaey,
``Precoding for Word Error Rate Minimization of LDPC Coded Modulation on Block Fading Channels,''
{\em Accepted for publication at the IEEE Trans. on Wir. Comm.}, April 2012, \href{http://telin.ugent.be/~dduyck/publications/TWCOM_prec.pdf}{Download} from telin.ugent.be/\verb1~1dduyck/publications/TWCOM\_prec.pdf.

\bibitem{duy2009ldg} D. Duyck, J.J. Boutros, and M. Moeneclaey,
``Low-Density Graph Codes for Slow Fading Relay Channels,''
{\em IEEE Trans. on Inf. Theory}, vol. 57, no. 7, July 2011.

\bibitem{duy2010dod} D. Duyck, D. Capirone, C. Hausl, and M. Moeneclaey,
``Design of Diversity-Achieving LDPC Codes for H-ARQ with Cross-Packet Channel Coding,''
{\em Personal Indoor Mobile Radio Communications (PIMRC)}, Istanbul, Turkey, Sept. 2010. 

\bibitem{duy2010rmf} D. Duyck, J.J. Boutros, and M. Moeneclaey,
``Rotated Modulations for Outage Probability Minimization: a fading space approach,''
ISIT (Intern. Symp. on Inf. Theory), Texas, June 2010. 

\bibitem{fabregas2004ccb} A. Guill\'en i F\`abregas,
{\em Concatenated codes for block-fading channels}, Ph.D. thesis, EPFL, June 2004.

\bibitem{fab2007cmi} A. Guill\'en i F\`abregas and G. Caire,
``Coded modulation in the block-fading channel: coding theorems and code construction,''
{\em IEEE Trans. on Inf. Theory}, vol.~52, no.~1, pp. 91-114, Jan. 2006.

\bibitem{fab2007mcm} A. Guill\'en i F\`abregas and G. Caire,
``Multidimensional coded modulation in block-fading channels,''
{\em IEEE Trans. on Inf. Theory}, vol.~54, no.~5, pp. 2367-2372, 2008.

\bibitem{gresset2004olp} N. Gresset, J.J. Boutros, and L. Brunel, 
``Optimal linear precoding for BICM over MIMO channels,'' 
{\em International Symposium on Information Theory (ISIT)}, Chicago, Illinois, July 2004.

\bibitem{gresset2008stc} N. Gresset, L. Brunel, and J.J. Boutros, 
``Space-time coding techniques with bit interleaved coded modulation over block-fading MIMO channels,''
{\em IEEE Trans. on Inf. Theory,} vol.~54, no.~5, pp. 2156-2178, May 2008.

\bibitem{guo2005it} D. Guo, S. Shamai, and S. Verd\'u,
``Mutual information and minimum mean-square error in Gaussian channels,''
{\em IEEE Trans. on Inf. Theory}, vol.~51, no.~4, pp. 1261-1282, Apr. 2005.

\bibitem{Hausl2007a} C. Hausl and A. Chindapol,
``Hybrid ARQ with Cross-Packet Channel Coding,''
{\em IEEE Comm. Letters}, vol.~11, no.~5, pp. 434-436, May 2007.

\bibitem{jorswieck2007opi} E.A. Jorswieck and H. Boche,
``Outage probability in multiple antenna systems,''
{\em Eur. Trans. Telecomm.}, 18: 217–233. 
 
\bibitem{knopp2000cbf} R. Knopp and P.A. Humblet,
``On coding for block fading channels,''
{\em IEEE Trans. on Inf. Theory}, vol.~46, no.~1, pp. 189-205, Jan. 2000.

\bibitem{kraidy2004olp} G.M. Kraidy, N. Gresset, and J.J. Boutros, 
``Information theoretical versus algebraic constructions of linear unitary precoders for non-ergodic multiple antenna channels,'' 
{\em Canadian Workshop on Information Theory}, Montreal, Canada, June 2005.

\bibitem{laneman2004cdw} J.N. Laneman, D. Tse, and G.W. Wornell,
``Cooperative diversity in wireless networks: Efficient protocols and outage behavior,''
{\em IEEE Trans. Inf. Theory}, vol.~50, no.~12, pp. 3062-3080, Dec. 2004.

\bibitem{lozano2006} A. Lozano, A. M. Tulino, and S. Verd\'u,
``Optimum power allocation for parallel Gaussian channels with arbitrary input distributions,''
{\em IEEE Trans. on Inf. Theory}, vol. 52, no. 7, pp. 3033–3051, July 2006.

\bibitem{malkamaki1999epc}  E. Malkamaki and H. Leib,
``Evaluating the performance of convolutional codes over block fading channels,''
{\em IEEE Trans. on Inf. Theory}, vol. 45, no. 5, pp. 1643-1646, Jul. 1999.

\bibitem{mohammed2011pbp} S.K. Mohammed, E. Viterbo, Y. Hong, and A. Chockalingam, 
``Precoding by Pairing Subchannels to Increase MIMO Capacity with Discrete Input Alphabets,''
{\em IEEE Trans. on Inf. Theory}, vol. 57, no. 7, pp. 4156-4169, July 2011.

\bibitem{mohammed2010mpw} S.K. Mohammed, E. Viterbo, Y. Hong, and A. Chockalingam, 
``MIMO Precoding with X- and Y-Codes,'' 
{\em IEEE Trans. on Inf. Theory}, vol. 57, no. 6, pp. 3542-3566, June 2011.

\bibitem{ozarow1994itc} L.H. Ozarow, S. Shamai, and A.D. Wyner,
``Information theoretic considerations for cellular mobile radio,''
{\em IEEE Trans. on Vehicular Technology}, vol.~43, no.~2, pp. 359-379, May 1994.

\bibitem{Perez2008} F. P\'erez-Cruz, M.R.D. Rodrigues, and S. Verd\'u,
``Optimal precoding for Digital Subscriber Lines,''
{\em in Proc. Intern. Conf. on Comm. (ICC)}, pp. 1200-1204, Beijing, China, May 2008.

\bibitem{Perez2010} F. P\'erez-Cruz, M.R.D. Rodrigues, and S. Verd\'u,
``MIMO Gaussian Channels With Arbitrary Inputs: Optimal Precoding and Power Allocation,''
{\em IEEE Trans. on Inf. Theory}, vol. 56, no. 3, pp. 1070-1084, March 2010.

\bibitem{proakis2000dc} J.G. Proakis, 
``Digital Communications,''
McGraw-Hill, 4th ed., 2000.

\bibitem{sendonaris2003ucd}  A. Sendonaris, E. Erkip, and B. Aazhang,
``User cooperation diversity-Part I: System description,''
{\em IEEE Trans. Commun.}, vol. 51, no. 11, pp. 1927-1938, Nov. 2003, DOI:~10.1109/TCOMM.2003.818096.

\bibitem{sendonaris2003ucd2}  A. Sendonaris, E. Erkip, and B. Aazhang,
``User cooperation diversity-Part II: Implementation aspects and performance analysis,''
{\em IEEE Trans. Commun.}, vol. 51, no. 11, pp. 1939-1948, Nov. 2003.

\bibitem{telatar1999com} I.E. Telatar, 
``Capacity of multi-antenna Gaussian channels,''
{\em Europ. Trans. on Telecomm.}, vol.~10, pp. 585-595, Nov./Dec. 1999.

\bibitem{tse2005fwc} D.N.C. Tse and P. Viswanath,
{\em Fundamentals of Wireless Communication}, Cambridge University Press, May 2005.





\end{thebibliography}
\end{document}